\newcommand{\comment}[1]{}
\def\big{\bigskip}
\newtheorem{theorem}{Theorem}
\newtheorem{thm}{Theorem}[section]
\newtheorem{lem}[thm]{Lemma}
\theoremstyle{definition}
\newtheorem{remark}[thm]{Remark}
\newtheorem{defn}[thm]{Definition}
\newtheorem{example}[thm]{Example}
\theoremstyle{remark}
\providecommand*{\propertyautorefname}{Property}
\let\oldmarginpar\marginpar
\renewcommand\marginpar[1]{\oldmarginpar[\raggedleft\footnotesize #1]%
{\raggedright\footnotesize #1}}
\begin{document}
\begin{frontmatter}

\date{\today}

\title{Hidden Symmetries in Real and Theoretical Networks}
\author[ds]{Dallas Smith}
\address[ds]{Department of Mathematics, Brigham Young University, Provo, UT 84602, USA, dallas.smith@mathematics.byu.edu}

\author[ben]{ and Benjamin Webb}
\address[ben]{Department of Mathematics, Brigham Young University, Provo, UT 84602, USA, bwebb@mathematics.byu.edu}


\begin{abstract}
Symmetries are ubiquitous in real networks and often characterize network features and functions.  Here we present a generalization of network symmetry called \emph{latent symmetry}, which is an extension of the standard notion of symmetry.  They are defined in terms of standard symmetries in a reduced version of the network.  One unique aspect of latent symmetries is that each one is associated with a \emph{size}, which provides a way of discussing symmetries at multiple scales in a network.  We are able to demonstrate a number of examples of networks (graphs) which contain latent symmetry, including a number of real networks. In numerical experiments, we show that latent symmetries are found more frequently in graphs built using preferential attachment, a standard model of network growth, when compared to non-network like (Erd{\H o}s-R\'enyi) graphs. Finally we prove that if vertices in a network are latently symmetric, then they must have the same eigenvector centrality, similar to vertices which are symmetric in the standard sense.  This suggests that the latent symmetries present in real-networks may serve the same structural and functional purpose standard symmetries do in these networks. We conclude from these facts and observations that \emph{latent symmetries} are present in real networks and provide useful information about the network potentially beyond standard symmetries as they can appear at multiple scales.

\end{abstract}

\begin{keyword}
Graph Automorphism \sep Network Symmetry \sep Isospectral Network Reduction\\
\end{keyword}

\end{frontmatter}


\section{Introduction}

Real networks have a number of properties that distinguish them from many other graphs of interest. For instance, they tend to have right-skewed degree distributions, high clustering coefficients, and the ``small-world" property, etc. \cite{newman2010networks}. Additionally, real networks generally contain a significant number of symmetries \cite{dettmann2017symmetric}.  Consequently, many real networks have a large symmetry group \cite{Mac2008}.  It is important to study these symmetry groups for a number of reasons. First, understanding network symmetry helps us better understand the formation of particular networks \cite{symI}. Symmetries can also provide information about vertex function. For instance, it has been observed that two symmetric nodes can play the same role in a network, which is thought to increase network robustness \cite{macarthur2009spectral}. In the case of networks dynamics and function, symmetries are known to be important to the processes of synchronization or partial synchronization \cite{antoneli2006symmetry}.

Beyond these standard symmetries, 
``near'' symmetries also naturally occur in real networks \cite{symI}. Even though these approximate symmetries are not represented in the symmetry group of the network, they still have an effect on network behavior, both in form and in function  \cite{symI}. For this reason, there has been a number of attempts to weaken the notion of structural symmetry.  By \emph{structural symmetry}, we mean there exists a permutation of the network's vertices that does not change the network structure. A near symmetry can be described in terms of properties invariant under some other network transformation.  As an example the network's vertex degree could be maintained as the network's topology is transformed \cite{holme2006detecting}.

Notions of stochastic symmetry have also been established \cite{symI} to characterize near symmetries in real networks.  In this framework one chooses a statistical ensemble of networks which are similar but not exactly identical and assigns a probability measure to them.   This allows one to quantify approximate symmetry and associate characteristics of similar networks. This weakening of the notion of symmetry has led to the study of symmetry groupoids, which can be been used to create synchronization in dynamical networks \cite{stewart2004networking}, \cite{olver2015symmetry}, \cite{stewart2003symmetry}.

In this paper we propose a very different extension of the notion of symmetry called \emph{latent symmetry}. Latent symmetries are derived from structural symmetry in a particular reduced version of the network. There are many ways to reduce a network, such as removing edges in a specified way or collapsing chosen subnetworks into single vertices to ``course grain'' the network (for instance, see \cite{xiao2008network}).  When finding latent symmetries, we do not use any of these techniques but use what is referred to as an \emph{isospectral graph reduction} \cite{thebook} to reduce the size of the network. That is, a latent symmetry is a standard structural symmetry in an isospectral reduction of the network. This specific method is chosen because it preserves the spectral properties of a network, i.e. the eigenvalues and eigenvectors associated with the network. The motivation for using a isospectral reduction is that a network's spectral properties encode various structural characteristics, including graph connectivity, vertex centrality, and importantly symmetry (see \cite{barrett2017equitable}, \cite{brouwer2011spectra}, \cite{hahn2013graph}, \cite{chungspectral}, among others).  Additionally, for dynamical networks, stability and other dynamic properties depend on the spectrum of the network \cite{thebook}, \cite{almendral2007dynamical}.

An important property of latent symmetries that is, to the best of our knowledge, not possessed by any other type of symmetry is that it has a sense of \emph{scale}.  That is, we can define a \emph{measure of latency} for any latent symmetry, which one can think of as how deep the symmetry is buried within the network. This is of particular interest since many real networks are known to have a \emph{hierarchical structure} in which statistically significant substructures known as \emph{motifs} are repeated at multiple scales throughout the network \cite{alessandro2007large}. Our findings suggest that not only are motifs to be found at multiple levels in a real network, but also symmetries.  Thus one can study a network's hierarchical structure of symmetries to better understand the interplay of network structure and function, in particular a network's multiple levels (scales) of redundancy.

This paper is organized as follows.  In section 2, after describing standard network symmetries, we precisely define a \emph{latent symmetry} and give a number of examples of such symmetries.  We then define the \emph{measure of latency} of a latent symmetry. In section 3 we give examples of real-world networks which exhibit latent symmetries. In section 4 we prove various spectral properties regarding latent symmetries, including showing that if two nodes are latently symmetric then they have the same eigenvector centrality i.e. the same importance in the network using this metric. Finally, in section 5 we further argue that latent symmetries have relevance in the real world by showing that they are more likely to occur in networks generated using preferential attachment, rather than networks that are randomly generated, e.g. Erd{\H o}s-R\'enyi graphs.  This is significant because preferential attachment models capture many characteristics of real networks, while Erd{\H o}s-R\'enyi graphs do not \cite{price1976general}.


\section{Network Symmetries}\label{sec:2}
 The standard method used to describe the topology of a network is a graph. Here, a \emph{graph} $G=(V,E,\omega)$ is composed of a \emph{vertex set} $V$, an \emph{edge set} $E$, and a function $\omega$ used to weight the edges $E$ of the graph. The vertex set $V$ represents the \emph{elements} of the network, while the edges $E$ represent the links or \emph{interactions} between these network elements. In some networks, it is useful to define a direction to each interaction.  This is the case in which an interaction between two network elements influences one but not the other. For instance, in a citation network, in which network elements are papers and edges represent whether one paper cites another, papers can only cite papers that have already been written. Thus each edge has a clearly defined direction.  We model this type of network as a \emph{directed} graph in which each edge is directed from one network element to another. If this does not apply, the edges are not directed and we have an \emph{undirected} graph.

 In this paper we are most interested in \emph{strongly connected} graphs, meaning for any two vertices in the graph there exists a path in the graph which start at one vertex and ends at the other.  For graphs that are not strongly connected, we often only consider the largest strongly connected component of the graph. Our reasons for doing this are essentially pragmatic since many real-world networks are extremely large consisting of a million or more vertices.  However, our theory applies to all networks whether strongly connected or not.

 The weights of the edges given by $\omega$ measure the \emph{strength} of these interactions. Some examples of weighted networks include: social networks where weights corresponds to the frequency of interaction between actors, food web networks where weights measure energy flow, or traffic networks where weights measure how often roads are used \cite{newman2010networks}. Here we consider networks with positive real-valued edge weights because they represent the majority of weighted networks considered in practice.  Though it is worth mentioning that the theory we present throughout the paper is valid for more general edge weights, e.g. complex-valued or more complicated weights (see for instance \cite{thebook}).

Let $G=(V,E,\omega)$ be a weighted graph on $n$ vertices representing a network.  Its \emph{weighted adjacency matrix}, $M=M(G)$, is an $n\times n$ matrix whose entries are given by $$M_{ij} = \left\{
        \begin{array}{ll}
            \omega(e_{i,j})\neq 0 & \text{if  }e_{i,j}\in E \\
            0 & \text{otherwise}
        \end{array}
    \right.
$$
where $e_{i,j}$ is the edge from vertex $i$ to vertex $j$. An \emph{unweighted} graph can be considered to be a special case of a weighted graph where all edge weights are equal to $1$.  Moreover, the weighted adjacency matrix of an undirected graph is symmetric since each edge can be thought of as a directed edge oriented in both directions.  Figure \ref{symmetry5} gives an example of an unweighted, undirected graph with its corresponding adjacency matrix. We note here that an adjacency matrix which corresponds to a strongly connected graph is by definition an \emph{irreducible matrix}.

It is also worth noting that there is a one-to-one relation between weighted graphs (networks) and their corresponding weighted adjacency matrices $M\in\mathbb{R}^{n\times n}$ meaning that there is no more information presented in one than the other.  Often it is more convenient to work with matrices instead of graphs, though both are useful ways to represent network structure. Graphs are typically used for network visualization while matrices are better suited for network analysis \cite{newman2010networks}. Throughout the paper we will use graphs and matrices without ambiguity to refer to the ``graph of the network'' and the ``matrix associated with the network.''

 The particular type of structure we consider in this paper is the notion of a graph symmetry, which can be understood via graph automorphisms.  These symmetries have received considerable attention in the literature  \cite{Mac2008}, \cite{symI}, \cite{xiao2008emergence}.  Intuitively, a graph automorphism describes how parts of a graph can be interchanged in a way that preserves the graph's overall structure.  In this sense these \emph{parts}, i.e., subgraphs, are symmetrical and together constitute a graph symmetry.  For example, consider the graph in Figure \ref{symmetry5}.  Here, it is easy to visually identify the symmetry between the yellow vertices 6 and 8, since transposing them would not change the graph's structure. Formally, a \emph{graph automorphism} of $G$ is defined to be a permutation $\phi: V \to V$ of the graph's vertices $V$ that preserves weights between the network's vertices.
 \begin{figure}
 \begin{center}
 \begin{tabular}{ll}
\raisebox{-.5\height}{
\begin{overpic}[scale=.4]{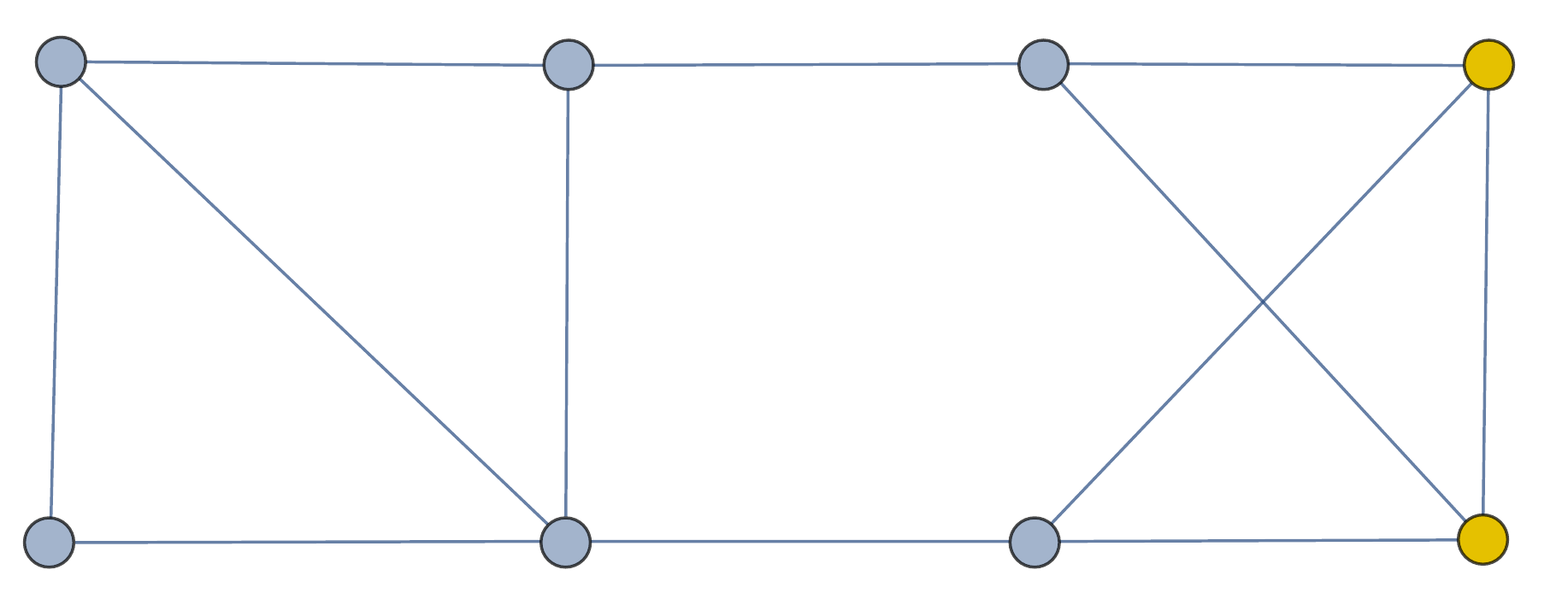}
	\put(50,-3){$G$}
	\put(38,6){$1$}
	\put(71,5){$2$}
	\put(38,35){$3$}
	\put(5,35.5){$4$}
	\put(4.5,6){$5$}
	\put(97.5,35){$6$}
	\put(70,35){$7$}
	\put(97.5,6){$8$}
  \end{overpic}}
  & \ \ \   $M(G)=\left( \begin{matrix}
0&1&1&1&1&0&0&0\\
1&0&0&0&0&1&0&1\\
1&0&0&1&0&0&1&0\\
1&0&1&0&1&0&0&0\\
1&0&0&1&0&0&0&0\\
0&1&0&0&0&0&1&1\\
0&0&1&0&0&1&0&1\\
0&1&0&0&0&1&1&0
\end{matrix}\right)
$ \
\end{tabular}
\end{center}
\caption{An example of an unweighted, undirected graph $G$ and its corresponding adjacency matrix $M(G)$. The graph has the symmetry given by the automorphism $\phi=(68)$.  The symmetric vertices 6 and 8 are highlighted yellow.}\label{symmetry5}
\end{figure}
    \begin{defn}\label{def:sym} \textbf{(Graph Automorphism)}
    An \emph{automorphism} $\phi$ of a weighted graph $G=(V,E,\omega)$ is a permutation of the graph's vertex set $V$ such that the weighted adjacency matrix $M$ satisfies $M_{ij} = M_{\phi(i) \phi(j)}$ for each pair of vertices $i$ and $j$ in $E$.
    \end{defn}
In the case of an unweighted graph, this definition is equivalent to saying vertices $i$ and $j$ are adjacent in $G$ if and only if $\phi(i)$ and $\phi(j)$ are adjacent in $G$. A collection $S$ of vertices in $V$ are \emph{symmetric} if for any two elements $a,b$ in $S$ there exists an automorphism $\phi$ of $G$ such that $\phi (a)=b$. As an example, the vertices 6 and 8 in Figure \ref{symmetry5} are symmetric since the permutation $\phi$ that transposes 6 and 8 and fixes all other vertices of $G$ (written in permutation cycle notation as $\phi=(68)$), is an automorphism of $G$.

Structural symmetries in networks are important as they can provide information about network robustness as well as the function of specific vertices \cite{macarthur2009spectral}. Often some of the same type of information can be extracted from a set of vertices which are ``nearly'' symmetric. There are a number of ways which have been proposed to precisely define a ``near'' symmetry \cite{symI}. Our method involves finding structural symmetries in a reduced version of the network.  The network is reduced in a way that preserves spectral properties of the network's adjacency matrix.  We are interested in preserving the spectrum of the network since there are a number of important network characteristics which can be determined from its spectrum.

To make this idea more precise, we need the notion of an \emph{isospectral graph reduction}, which is the method we will use to reduce the underlying graph structure of a network. This is a graph operation which produces smaller graph with essentially the same set of eigenvalues as the original graph. This method for reducing the graph associated with a network can be formulated both for the graph and equivalently for the adjacency matrix associated with the network, i.e. an \emph{isospectral graph reduction} and an \emph{isospectral matrix reduction}, respectively. Both types of reductions will be useful to us.

For the sake of simplicity we begin by defining an isospectral matrix reduction. For this reduction we need to consider matrices of rational functions. The reason is that, by the Fundamental Theorem of Algebra, a matrix $A\in\mathbb{R}^{n\times n}$ has exactly $n$ eigenvalues including multiplicities. In order to reduce the size of a matrix while at the same time preserving its eigenvalues we need something that carries more information than just scalars. The objects we will use to preserve this information are rational functions. The specific reasons for using rational functions can be found in \cite{thebook}, Chapter 1.

For a matrix $M\in\mathbb{R}^{n\times n}$ let $N=\{1,\ldots,n\}$. If the sets $R,C\subseteq N$ are proper subsets of $N$, we denote by $M_{RC}$ the $|R| \times |C|$ \emph{submatrix} of $M$ with rows indexed by $R$ and columns indexed by $C$. We denote the subset of $N$ not contained in $S$ by $\bar{S}$, that is $\bar{S}$ is the \emph{complement} of $S$. We let $\mathbb{W}^{n\times n}$ be the set of $n\times n$ matrices whose entries are rational functions  $p(\lambda)/q(\lambda)\in\mathbb{W}$, where $p(\lambda)$ and $q(\lambda)\neq0$ are polynomials with real coefficients in the variable $\lambda$ with no common factors. The isospectral reduction of a square real-valued matrix is defined as follows.

\begin{defn}\label{def:isored} \textbf{(Isospectral Matrix Reduction)}
The \emph{isospectral reduction} of a matrix $M\in\mathbb{R}^{n\times n}$ over the proper subset $S\subseteq N$ is the matrix
\[
\mathcal{R}_S(M) = M_{SS} - M_{S\bar{S}}(M_{\bar{S}\bar{S}}-\lambda I)^{-1} M_{\bar{S}S}\in\mathbb{W}^{|S|\times|S|}.
\]
\end{defn}

 The eigenvalues of the matrix $M=M(\lambda)\in\mathbb{W}$ are defined to be solutions of the \emph{characteristic equation}
\[
\det(M(\lambda)-\lambda I)=0,
\]
which is an extension of the standard definition of the eigenvalues for a matrix with complex entries. By way of notation we let $\sigma(M)$ denote the set of eigenvalues of $M$ including multiplicities. An important aspect of an isospectral reduction is that the eigenvalues of the matrix $M$ and the eigenvalues of its isospectral reduction $\mathcal{R}_S(M)$ are essentially the same, as described by the following theorem \cite{thebook}.

\begin{theorem}\label{thm:maintheorem}\textbf{(Spectrum of Isospectral Reductions)} For $M\in\mathbb{R}^{n\times n}$ and a proper subset $S\subseteq N$, the eigenvalues of the isospectral reduction $\mathcal{R}_S(M)$ are
$$\sigma\big(\mathcal{R}_S(M)\big)=\sigma(M)-\sigma(M_{\bar{S}\bar{S}}).$$ \vspace{-1.25cm}
\end{theorem}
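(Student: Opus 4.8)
The plan is to reduce the statement to the classical Schur-complement determinant identity. First I would partition $M - \lambda I$ into blocks indexed by $S$ and $\bar{S}$, writing
$$M - \lambda I = \begin{pmatrix} M_{SS} - \lambda I & M_{S\bar{S}} \\ M_{\bar{S}S} & M_{\bar{S}\bar{S}} - \lambda I \end{pmatrix}.$$
For any scalar value of $\lambda$ at which the lower-right block $M_{\bar{S}\bar{S}} - \lambda I$ is invertible --- that is, for all but the finitely many $\lambda \in \sigma(M_{\bar{S}\bar{S}})$ --- the block factorization of the determinant gives
$$\det(M - \lambda I) = \det(M_{\bar{S}\bar{S}} - \lambda I)\cdot \det\!\big((M_{SS} - \lambda I) - M_{S\bar{S}}(M_{\bar{S}\bar{S}} - \lambda I)^{-1}M_{\bar{S}S}\big).$$
The crucial observation is that, by Definition \ref{def:isored}, the second factor on the right is exactly $\det(\mathcal{R}_S(M) - \lambda I)$.

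Next I would reinterpret this as an identity of rational functions in $\lambda$. The left-hand side is $\pm$ the characteristic polynomial $p_M(\lambda)$ of $M$, whose roots counted with multiplicity form the multiset $\sigma(M)$; the first factor on the right is $\pm$ the characteristic polynomial $p_{M_{\bar{S}\bar{S}}}(\lambda)$, whose roots form $\sigma(M_{\bar{S}\bar{S}})$. Since two rational functions that agree on a cofinite set of points agree identically, the displayed equality holds for all $\lambda$, and hence
$$\det(\mathcal{R}_S(M) - \lambda I) = (-1)^{|S|}\,\frac{p_M(\lambda)}{p_{M_{\bar{S}\bar{S}}}(\lambda)}.$$

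Finally, by definition the eigenvalues $\sigma(\mathcal{R}_S(M))$ are the solutions of the characteristic equation $\det(\mathcal{R}_S(M) - \lambda I) = 0$, i.e. the zeros of the rational function above once it is reduced to lowest terms. Cancelling the common factors of numerator and denominator removes, for each scalar $\mu$, exactly $\min\big(\mathrm{mult}_{\sigma(M)}(\mu),\,\mathrm{mult}_{\sigma(M_{\bar{S}\bar{S}})}(\mu)\big)$ copies of $\mu$ from the numerator; a value $\mu$ whose multiplicity in $\sigma(M_{\bar{S}\bar{S}})$ exceeds that in $\sigma(M)$ survives only as a pole, not a zero. The surviving zeros therefore constitute precisely the multiset difference $\sigma(M) - \sigma(M_{\bar{S}\bar{S}})$, which is the assertion of the theorem.

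I expect the main obstacle to be careful bookkeeping in the rational-function setting rather than any deep idea. Two points need attention: first, justifying the passage from the pointwise determinant identity, valid only where $M_{\bar{S}\bar{S}} - \lambda I$ is invertible, to an identity of rational functions valid for every $\lambda$ (the standard cofinite-agreement argument); and second, verifying that the eigenvalues of a matrix over $\mathbb{W}$, defined through the roots of its characteristic equation, are counted exactly by the numerator of the reduced fraction, so that the set subtraction in the statement is genuinely a multiset difference that never produces negative multiplicities.
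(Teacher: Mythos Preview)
The paper does not prove this theorem; it is stated with a citation to \cite{thebook} and used as background for the rest of the article, so there is no in-paper argument to compare against. Your Schur-complement approach is correct and is in fact the standard route taken in the cited reference: partition $M-\lambda I$, apply the block-determinant identity to obtain $\det(M-\lambda I)=\det(M_{\bar S\bar S}-\lambda I)\,\det(\mathcal{R}_S(M)-\lambda I)$, and read off the multiset identity for the zeros. Your two flagged bookkeeping points---extending the pointwise identity to a rational-function identity, and interpreting the eigenvalues of a $\mathbb{W}$-valued matrix as the zeros of the reduced characteristic rational function---are exactly the technical content that has to be supplied, and you have handled both correctly, including the observation that excess multiplicity in $\sigma(M_{\bar S\bar S})$ produces poles rather than zeros and hence never creates negative multiplicities in the difference.
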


That is, when a matrix $M$ is isospectrally reduced over a set $S$, the set of eigenvalues of the resulting matrix is the same as the set of eigenvalues of the original matrix $M$ after removing any elements which are eigenvalues of the submatrix $M_{\bar{S}\bar{S}}$.

Phrased in terms of graphs, if the graph $G=(V,E,\omega)$ with adjacency matrix $M$ is isospectrally reduced over some proper subset of its vertices $S\subseteq V$ then the result is the  reduced graph $\mathcal{R}_S(G)=(S,\mathcal{E},\mu)$ with adjacency matrix $\mathcal{R}_S(M)$. Hence,
\[
\sigma\big(\mathcal{R}_S(G)\big)=\sigma(G)-\sigma(G|\bar{S}).
\] \vspace{-1.25cm}

\noindent where eigenvalues of a graph are the eigenvalues of a graph's adjacency matrix and where $G|\bar{S}$ denotes the subgraph of $G$ restricted to the vertices not contained in $S$. It is worth noting that the matrix $M$ and the submatrix $M_{\bar{S}\bar{S}}$ often have no eigenvalues in common, in which case the spectrum is unchanged by the reduction, i.e. $\sigma(\mathcal{R}_S(M))=\sigma(M)$.

Using isospectral reductions we can define a generalization of the notion of a graph symmetry.

\begin{defn}\label{def:ls}\textbf{(Latent Symmetries)}
We say a graph $G$ has a \emph{latent symmetry} if there exists a subset of vertices which are symmetric in some isospectral reduction $\mathcal{R}_S(G)$ of $G$.
\end{defn}

\begin{figure}
\begin{center}
  \begin{overpic}[scale=.4]{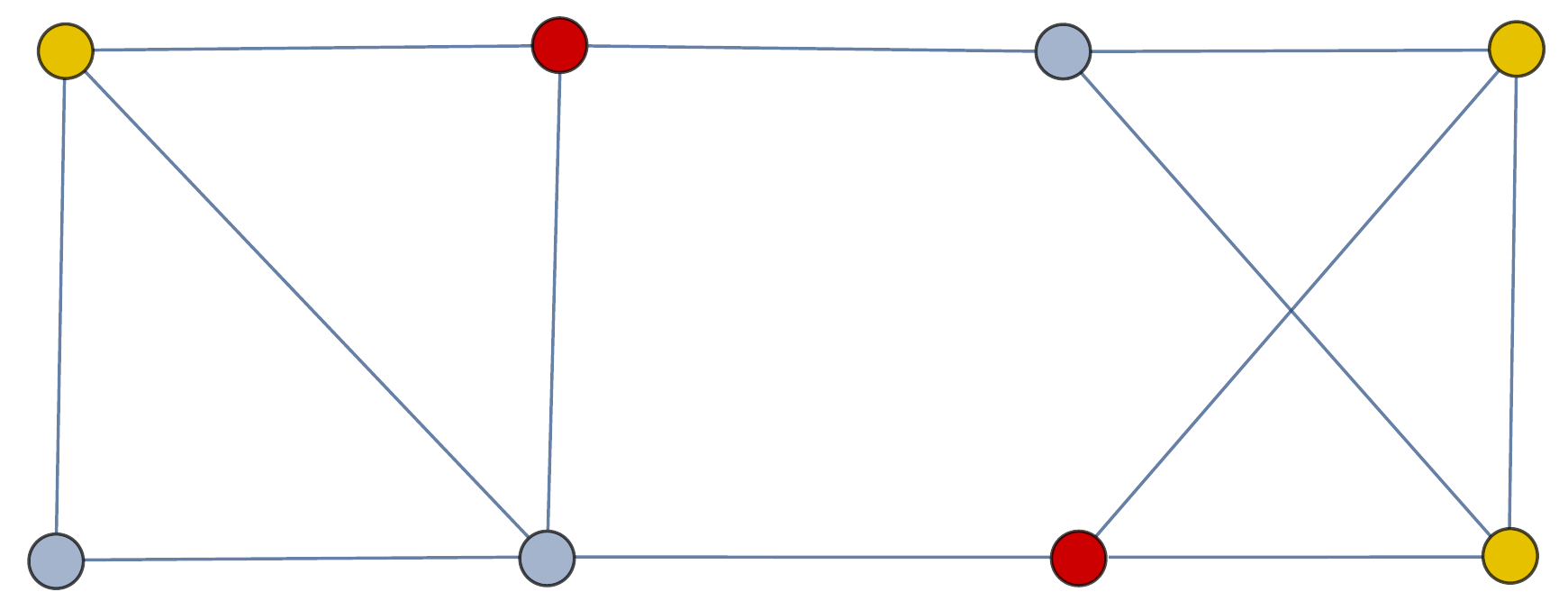}
	\put(50,-3){$G$}
	\put(37.5,5){$1$}
	\put(72.5,3.9){$2$}
	\put(38.5,36.5){$3$}
	\put(6,36.5){$4$}
	\put(4.5,5){$5$}
	\put(98.5,37.5){$6$}
	\put(70,36.5){$7$}
	\put(98,6){$8$}
  \end{overpic}\qquad
  \raisebox{.2\height}{
  \begin{overpic}[scale=.35]{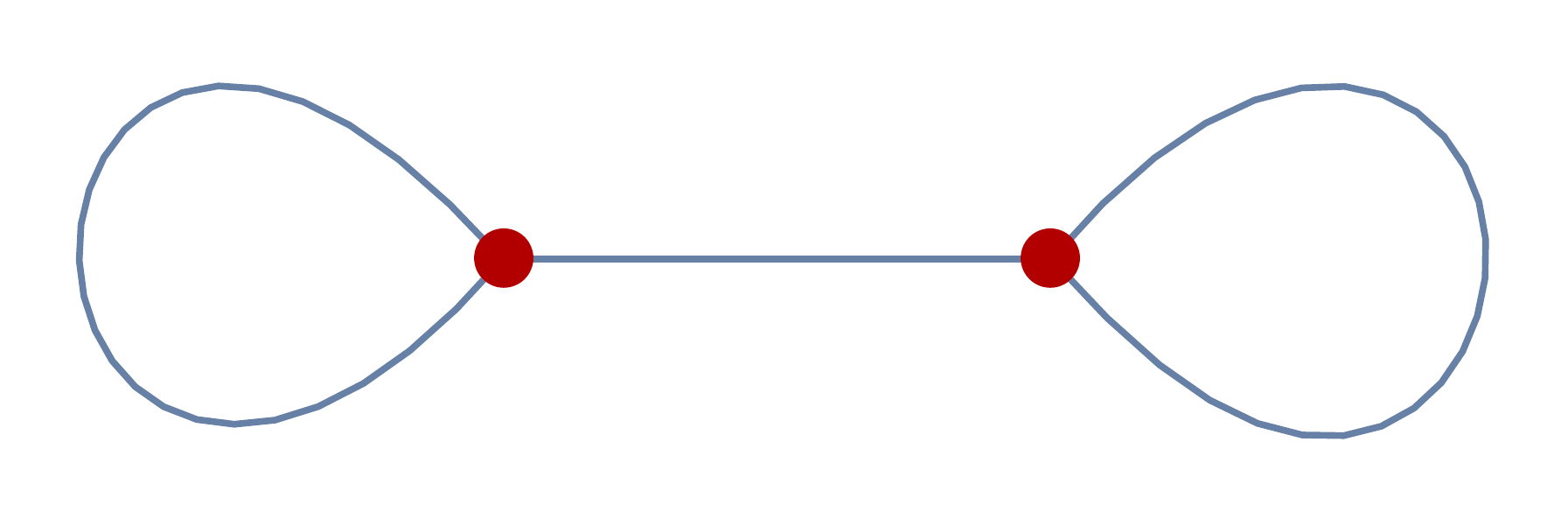}
	\put(40,2){$\mathcal{R}_{\{2,3\}}(G)$}
	\put(31,20){$3$}
	\put(65,19){$2$}
	\put(7,31){$\frac{3\lambda}{\lambda^2-\lambda-2}$}
	\put(75,31){$\frac{3\lambda}{\lambda^2-\lambda-2}$}
	\put(42,20){$\frac{\lambda+2}{\lambda^2-\lambda-2}$}
  \end{overpic} }
\end{center}
\caption{(Left) The undirected graph $G$ from Figure \ref{symmetry5} which has both standard and latent symmetries. Red vertices 2 and 3 are latently symmetric.  Yellow vertices 6 and 8 have a standard symmetry between them, but 4 is latently symmetric to both of 6 and 8. (Right) The isospectral reduction of the top graph over vertices 2 and 3, showing the latent symmetry between these two vertices }\label{fig:0}
\end{figure}
 The reason we refer to such symmetries as latent symmetries is that they are difficult to see before the graph reduction is performed, thus they are in some sense \emph{hidden} within the network. For the remainder of the paper, structural symmetries as defined in Definition \ref{def:sym} will be referred to as \emph{standard symmetries} to distinguish them from the latent symmetries defined in Definition \ref{def:ls}. We note here that standard symmetries are a subset of latent symmetries since reducing a graph $G=(V,E,\omega)$ over its entire set of vertices preserves the graph, i.e. $\mathcal{R}_V(G)=G$.

\begin{example} The graph in Figure \ref{fig:0} is an example of a graph with both standard and latent symmetries.  In this figure, colors correspond to groups of vertices which are latently symmetric e.g. vertices 2 and 3.  We note that the yellow vertices 6 and 8 form a standard graph symmetry since transposing the two vertices, (i.e. switching their  labels), does not change the graph structure of the graph $G$. When $G$ is reduced over vertices 4 and 6 (or 4 and 8), the resulting reduced graph contains a standard symmetry, i.e. 4 and 6 (4 and 8) are latently symmetric.  Also reducing $G$ over the red vertices 2 and 3 results in a graph with symmetry as shown on the right of Figure \ref{fig:0}.
\end{example}

\begin{figure}
\begin{center}
\includegraphics[scale=.35]{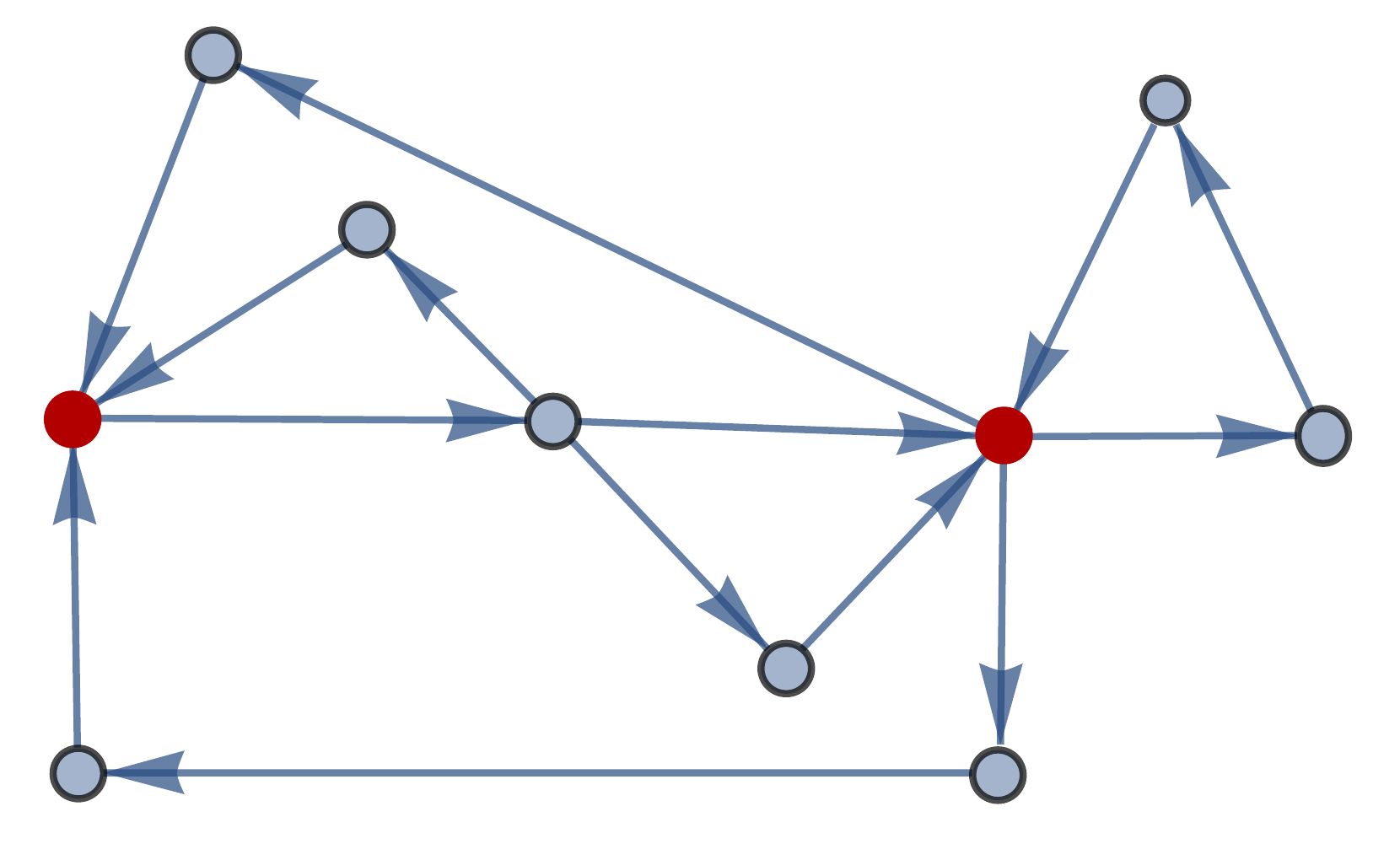}	\qquad
  \begin{overpic}[scale=.35]{0rednl.pdf}
    \put(-70,-5){\large $G$}
	\put(40,0){\large $\mathcal{R}_{\{1,4\}}(G)$}
	\put(32,19){\small $1$}
	\put(63.5,19){\small $4$}
   \put(-105.5,26){\small $1$}
	\put(-65,30){\small $2$}
       \put(-77,40){\small $3$}
	\put(-34.5,23){\small $4$}
       \put(-48.7,11){\small $5$}
	\put(-88,54){\small $6$}
           \put(-35,4){\small $7$}
	\put(-105,4){\small $8$}
       \put(-14,26){\small $9$}
	\put(-25,49){\small $10$}
	\put(-1,17){$\frac{1}{\lambda^2}$}
	\put(96,17){$\frac{1}{\lambda^2}$}
	\put(45,20){$\frac{\lambda+1}{\lambda^2}$}
  \end{overpic}
\caption{In the graph $G$ (left) vertices 1 and 4 are latently symmetric as can be seen by reducing $G$ to $\mathcal{R}_{\{1,4\}}(G)$ (right).}\label{fig:directedex}
\end{center}
\end{figure}

Some properties of standard symmetries extend to latent symmetries (see, for example, the results in section \ref{sec:4}).  Like standard symmetries, latent symmetries are \emph{transitive}. By this we mean that if there exists a latent symmetry between vertices $a$ and $b$ and a latent symmetry between the vertices $b$ and $c$ in a graph, there must be a latent symmetry between vertices $a$ and $c$.  We note, however, in this scenario there is no guarantee that there exists of subset of vertices $S$ such that $a,b$, and $c$ are all is symmetric in $\mathcal{R}_S(G)$, i.e. $a$, $b$ and $c$ may not be latently symmetric \emph{as a set}.  This is in contrast to standard symmetries where for any set of vertices that are pairwise symmetric, there must exist an automorphism for which all these vertices lie in the same orbit i.e. they are all symmetric as a set. In the setting of standard symmetris this is proved by noting the composition of two automorphisms is an automorphism.

Before moving on in it is also worth mentioning that the definition for latent symmetries works for both directed and undirected graphs. The following is an example of a directed graph with a latent symmetry, which we include here to give the reader some intuition for how latent symmetries can arise. For this we note that a \emph{path} in a directed graph $G=(V,E,\omega)$ is a sequence of distinct vertices, $\{v_0,v_1,\dots,v_n\}\subseteq V$ such that the set of directed edges $\{(v_i,v_{i+1}) \ \ | \ \ 0\leq i\leq n-1\}$ is contained in $E$ and a \emph{cycle} is path for which $v_0=v_n$.
 \begin{example}\label{ex:dir}
Consider the directed graph $G$ on the left of Figure \ref{fig:directedex}. Because all cycles in this graph contain one of the red vertices (labeled 1 and 4), we can write a finite list of paths and cycles that both begin and end on these vertices.  We observe that the paths from 1 to 4 are \{1,2,4\} and \{1,2,5,4\}, while all paths from 4 to 1 are \{4,6,1\} and \{4,7,8,1\}. Thus, there are the same number of paths with the same lengths from 1 to 4 as there are from 4 to 1. Also there is only one cycle from vertex 1 to itself (not including 4); namely \{1,2,3,1\}, and only one cycle from vertex 4 to itself (not including 1); namely \{4,9,10,4\}. Both of these cycles have length 4. This symmetry in number and length of paths and cycles guarantees that a symmetry will appear after reducing the graph over these two vertices. That is, vertices 1 and 4 are latently symmetric.  This can be seen in the reduced graph $\mathcal{R}_{\{1,4\}}(G)$ in which there is an automorphism between these two vertices (transposition).

When a graph contains cycles which do not contain any vertices in the reducing set $S$, it is not possible to write down all the paths and cycles as we did for the graph $G$.  In this case it becomes much more difficult to construct and identify latent symmetries. This is the case for even small undirected graphs. To highlight the unintuitive nature of latent symmetry in this case as well as give a sense of the variety of latent symmetries that occur in undirected graphs, Figure \ref{fig:lots} depicts six more graphs with examples of latent symmetries.  It is worth mentioning that by an exhaustive search we have found the smallest undirected graphs which contain a latent symmetry which is also not a standard symmetry has eights vertices (eg. Figure \ref{fig:lots}, top left).

\end{example}

\begin{figure}
\begin{center}
\includegraphics[scale=.5]{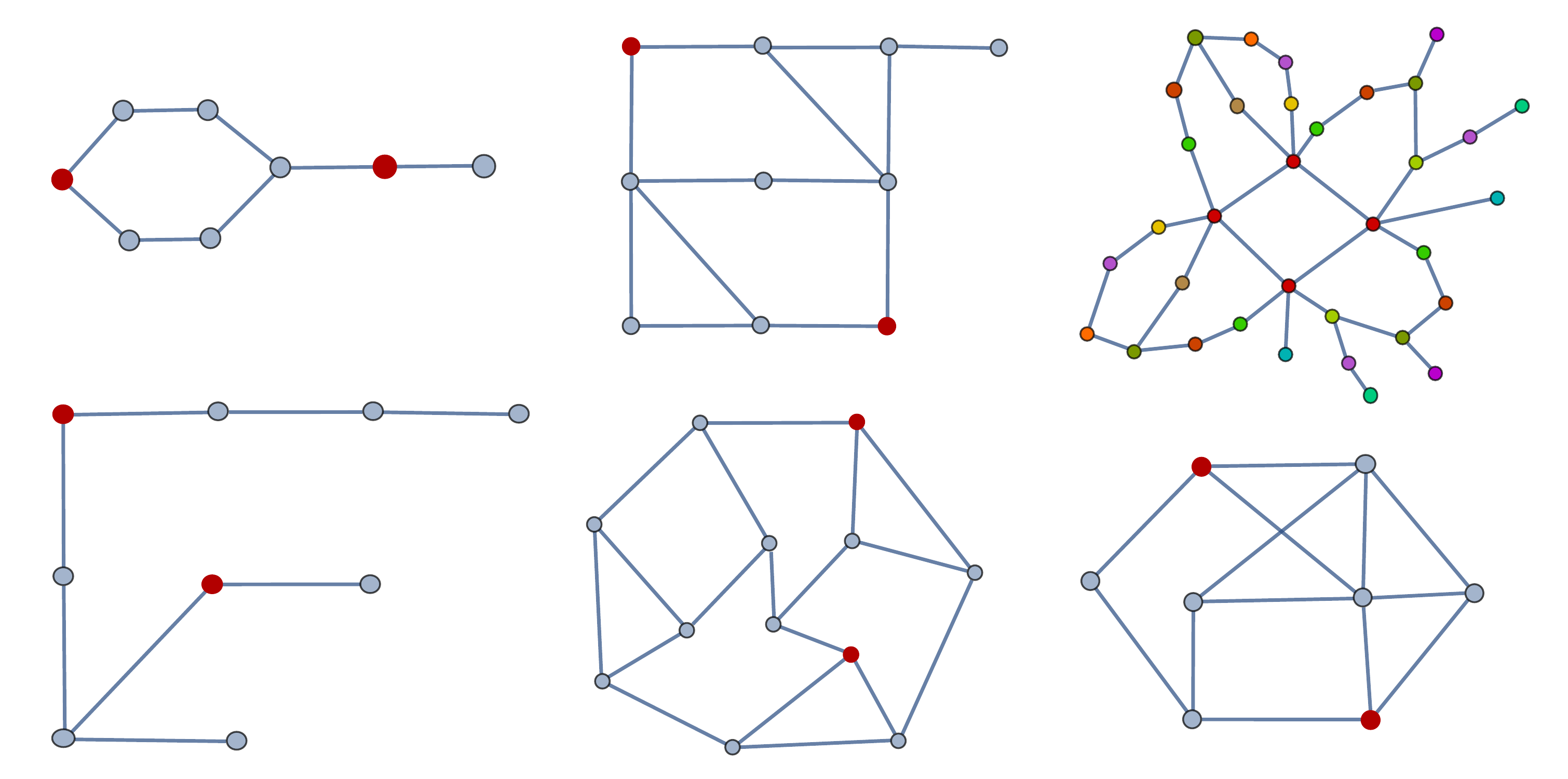}
\caption{Six examples of undirected graphs which contain latent symmetries. Vertices of the same color in a graph correspond to latent symmetries. (Top right) A smallest example of a graph with a latent symmetry but no standard symmetries (fewest edges and vertices), (Top middle) a graph which is symmetric without the pendant vertex on top, which turns the standard symmetry into a latent symmetry, (Top right) a graph where every vertex is latently symmetric to at least one other, though there are no standard symmetries, (Bottom left) a tree, (Bottom middle) a 3-regular graph, (Bottom right) a non-planar graph.}\label{fig:lots}
\end{center}
\end{figure}

Another useful concept we can explore regarding latent symmetries is the scale at which the symmetry is found within the network. 

\begin{defn}\label{MOL}\textbf{(Measure of Latency)} \
Let $G=(V,E,\omega)$ be a graph with $n$ vertices and let $S$ be a subset of its vertices which are latently symmetric. This latent symmetry can be said to have a \emph{measure of latency} $\mathcal{M}$, defined as $$\mathcal{M}(S)=\frac{n-|T|}{n-|S|}$$ where $T\subseteq V$ is a maximal set of vertices such that the vertice $S$ are symmetric in $\mathcal{R}_T(G)$.
\end{defn}
From this definition it is clear that $0\leq \mathcal{M}(S)\leq 1$ since $|T|\geq |S|$.  Moreover, if the vertices $S$ are symmetric in the unreduced graph, i.e. are symmetric in the standard sense, then $T=S$ and $\mathcal{M}(S)=0$.  On the other hand, if there is no possible choice of a vertex set $T$ for which $\mathcal{R}_T(G)$ has a symmetry between the vertices in $S$, except for $S=T$, then $\mathcal{M}(S)=1$.  This is the most hidden a latent symmetry can be since it requires reducing the entire graph to the set $S$ before the symmetry can be seen.  We note that although this is an interesting measure, it can be computationally difficult to find since it requires finding the largest possible reducing set under which a symmetry forms.

\begin{figure}
\begin{overpic}[scale=.25]{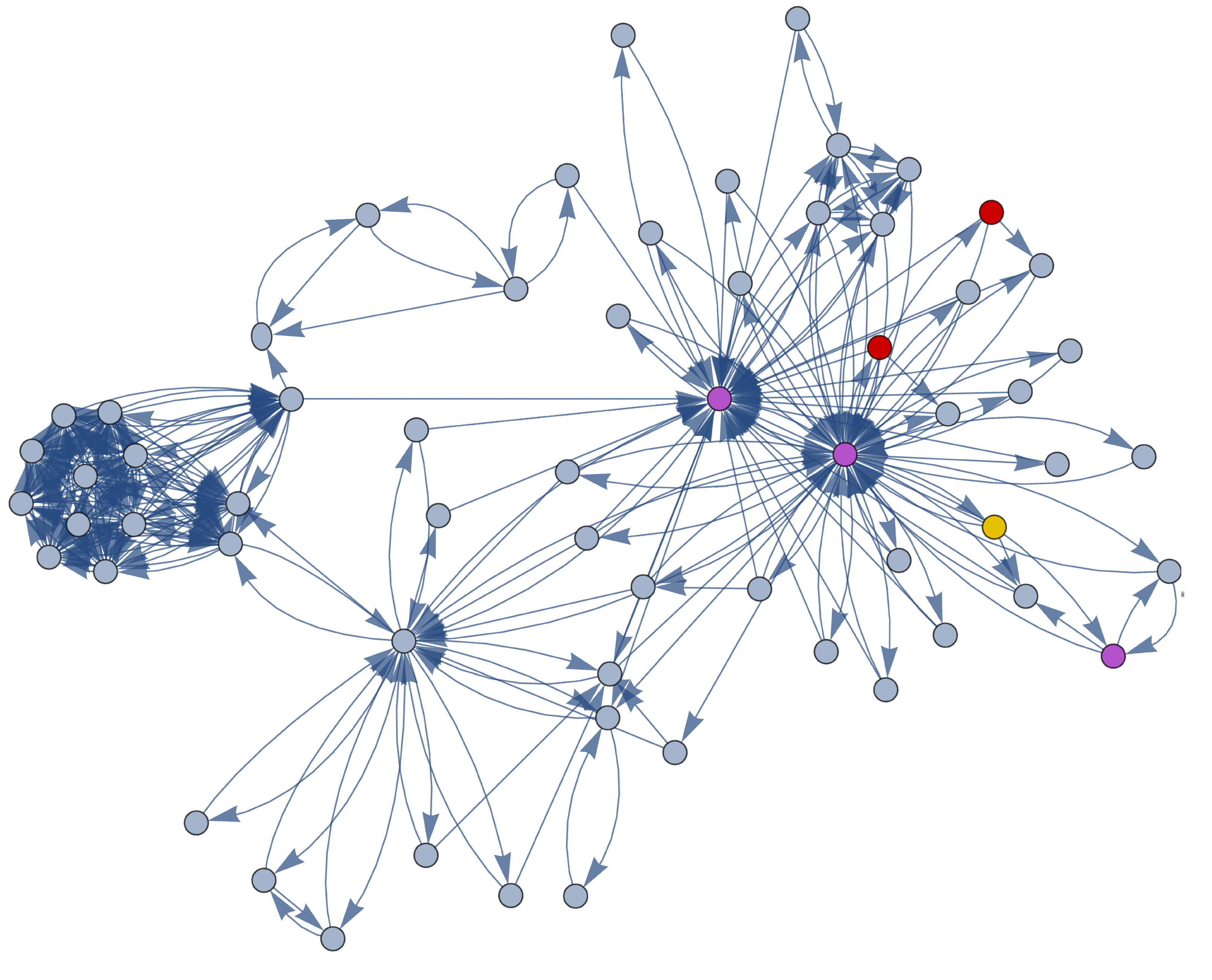}
\put(83.5,63){\tiny 1}
\put(94,24.5){\tiny 7}
\put(83.7,36.5){\tiny 3}
\put(88,58){\tiny 4}
\put(78.9,47.7){\tiny 5}
\put(83.5,28.4){\tiny 6}
\put(73.5,53){\tiny 2}
\end{overpic} \ \ \
\begin{overpic}[scale=.3]{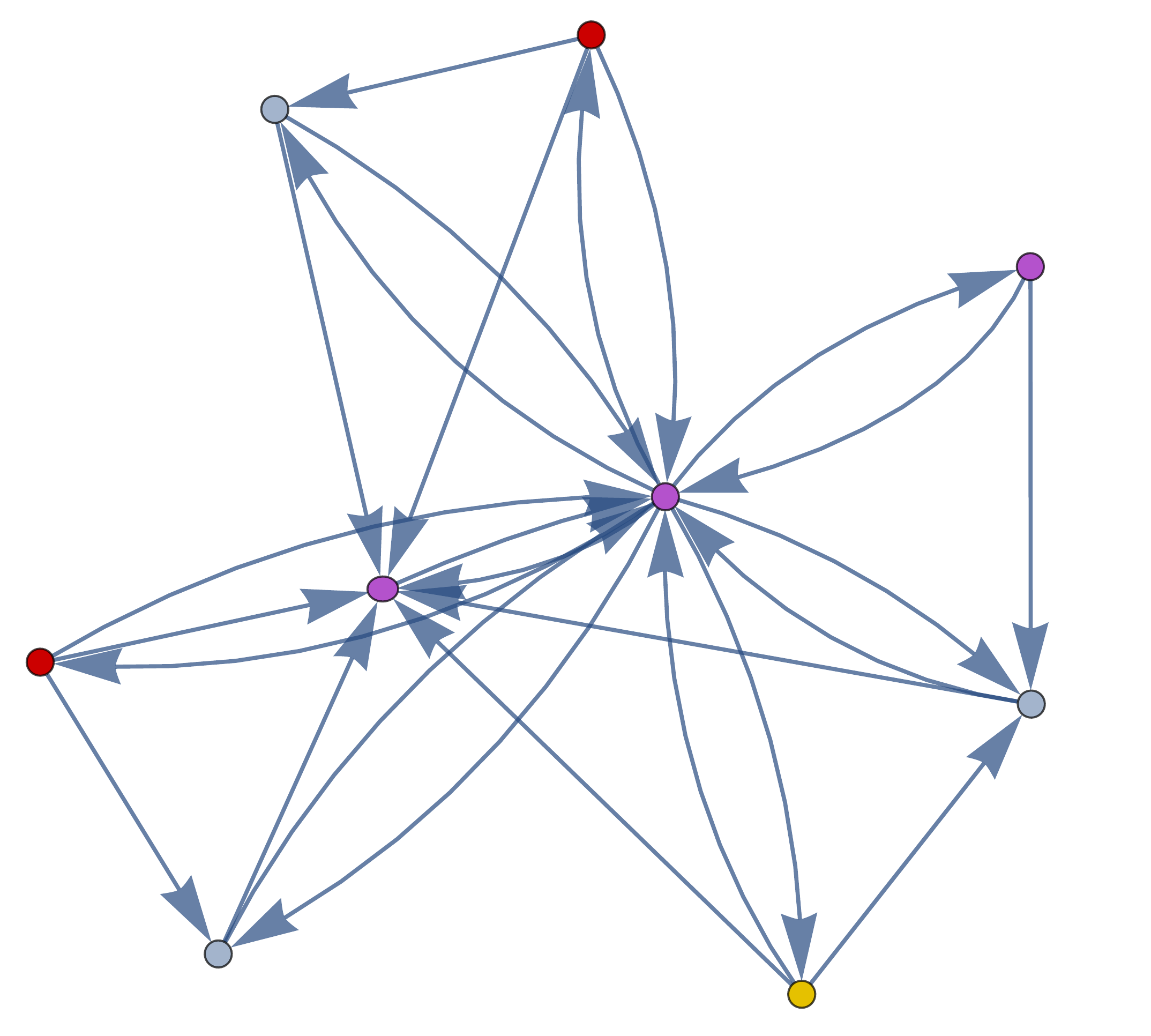}
\put(54,86){\small 1}
\put(93.2,65){\small 7}
\put(74,0){\small 3}
\put(25,84){\small 4}
\put(19,1){\small 5}
\put(93.2,26){\small 6}
\put(-1.5,32){\small 2}
\end{overpic}
\caption{(Left) A network representation of the largest strongly connected component of all Wikipedia webpages in the ``logic puzzle'' category \cite{wiki}.  Vertices represent webpages while the direct edges represent hyperlinks between them. (Right) A subgraph of this network. Red vertices are symmetric and the yellow vertex is latently symmetric with the two red vertices. Purple vertices are nodes which have edges that are not displayed.  }\label{fig:att}
\end{figure}

\begin{example}
For the graph $G$ in Figure \ref{fig:0} we have shown that vertices 2 and 3 are latently symmetric as they are symmetric in the reduced graph $\mathcal{R}_{\{2,3\}}(G)$.  However, we actually do not need to reduce to such a small graph to see this symmetry. In fact, 2 and 3 are symmetric in the graph $\mathcal{R}_{\{2,3,4,8\}}(G)$, but are not symmetric in any reduction over five or more vertices. Thus the largest reducing set $T$ in which this symmetry appears must contain four elements.  Thus, $\mathcal{M}(\{2,3\})=\frac{n-|T|}{n-|S|}=\frac{8-4}{8-2}=2/3$.
\end{example}

The measure of latency we give to a network symmetry gives the symmetry a size or a scale within the network. This is reminiscent of one of the hallmarks of real networks in which specific structures, known as \emph{motifs}, are found at multiple scales within the network \cite{newman2006modularity}, \cite{leskovec2008statistical}. In the following section we investigate whether latent symmetries occur in real-world networks and specifically whether these occur at different scales.


\section{Latent Symmetries in Real Networks}\label{sec:3}

The first question one might have concerning latent symmetries is the extent to which they are actually observed in real network data.   In this section we present two very different real-world networks which contain latently symmetric vertices.
\begin{example}\label{ex:Logic}
Consider the web graph shown in Figure \ref{fig:att} (left) which represents all Wikipedia pages contained in the category \emph{``Logic Puzzles''} in August 2017. Each vertex represents a webpage and directed edges represent hyperlinks between webpages \cite{wiki}. The two red vertices are symmetric in this graph, while the yellow vertex is pairwise latently symmetric with the two red vertices. Figure \ref{fig:att} (right) shows a subgraph of the left graph to more clearly demonstrate the path structure causing the latent symmetry. In this second graph purple vertices do not display all of their connections. All vertices labeled 1-7 represent puzzles published by \emph{``Nikoli.''}  We can see that puzzles 1, 2 and 3 all have an edge pointing to another puzzle (4, 5 and 6 respectively).  However, 6 also has vertex 7 pointing to it. This breaks the symmetry between 3 and 1 (also between 3 and 2).  The latent symmetry is still present between 3 and 1 (also between 3 and 2) since the same paths are available for traversing the graph from 1 to 3 or from 3 to 1. Thus the latent symmetry is highlighting a common feature in these three puzzle that a standard symmetry search would overlook. For the vertices which are latently symmetric one can calculate their measure of latency to be $\mathcal{M}(\{1,3\})=1/30\approx 0.033.$

We mention that in the next section will show that by using the metric of eigenvector centrality, vertices 1 and 3 have the same importance to this network.  In fact, any set of vertices that are latently symmetric have the same eigenvector centrality (see Theorem \ref{thm:EC}).
\\

\end{example}

\begin{example}
A second example of a latent symmetry in real-world network data is in the metabolic network for the cellular processes in \emph{Arabidopsis thaliana}, a eukaryotic organism \cite{UND}. This is a biological network of chemical reactions, which is a very different type of network than the one considered in Example \ref{ex:Logic}.  Figure \ref{Euk} (left) shows the the largest strongly connected component of this network. Here vertices represent cellular substrates (as well as intermediary states) and edges represent metabolic pathways.  The red vertices highlighted in the figure (left) have a latent symmetry between them.  The right graph is a subgraph of the left where vertex color is preserved.  Yellow vertices represent substrates for which all of their edges from the original network are displayed, while purple vertices do not have all of their edges displayed. We can see that many of these vertices are almost symmetric, meaning many of the vertices appear to have a corresponding vertex with a similar local path structure.  It is worth noting that between the red vertices there exists the same number of paths of the same length. As in the previous example, this suggests some kind of structural similarity exists between these vertices that a search for standard symmetry would overlook.  These vertices are again very close to being symmetric, which is quantized by their measure of latency $\mathcal{M}(S) \approx .0231.$

\begin{figure}
\includegraphics[scale=.25]{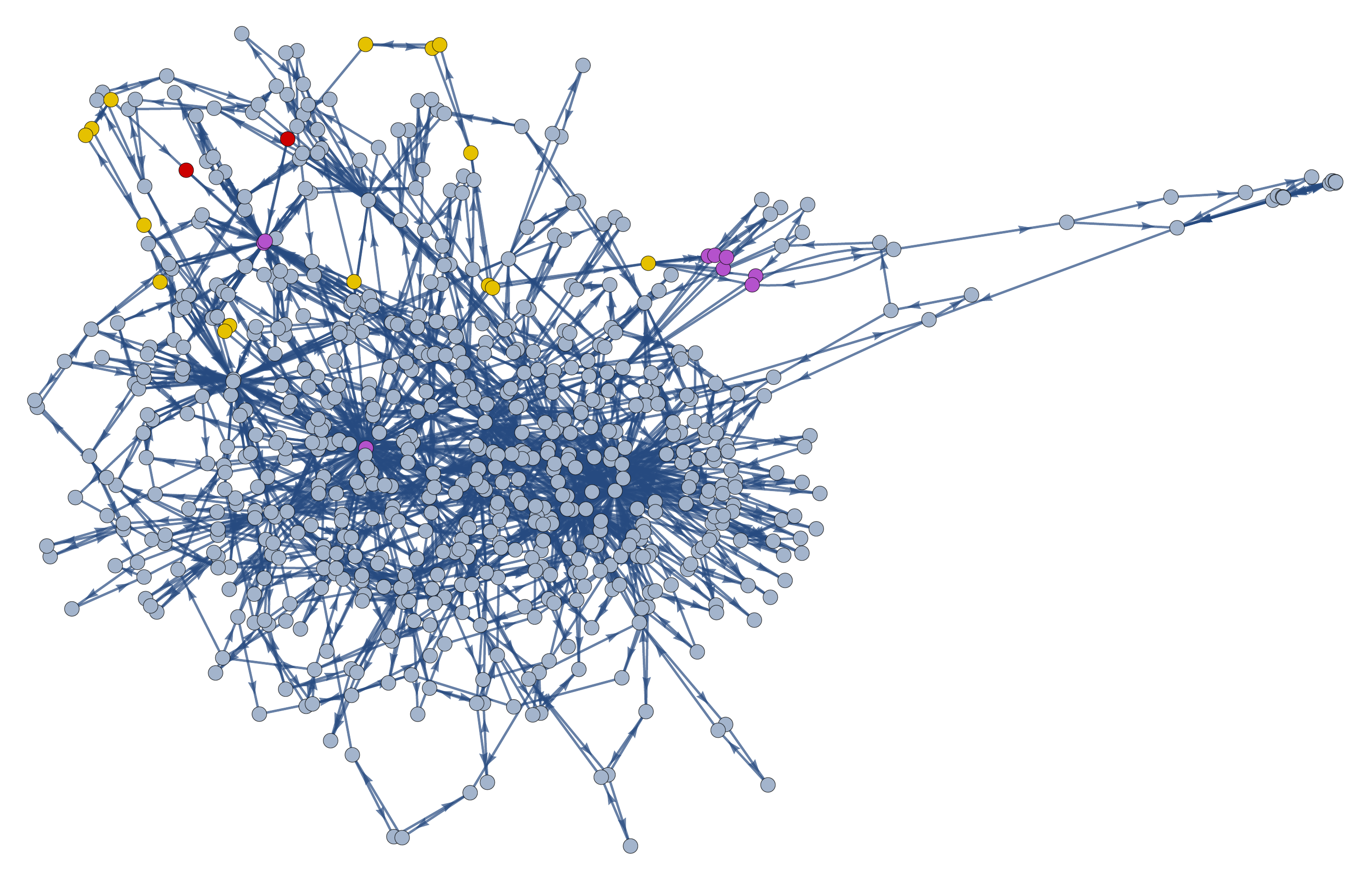}
\includegraphics[scale=.17]{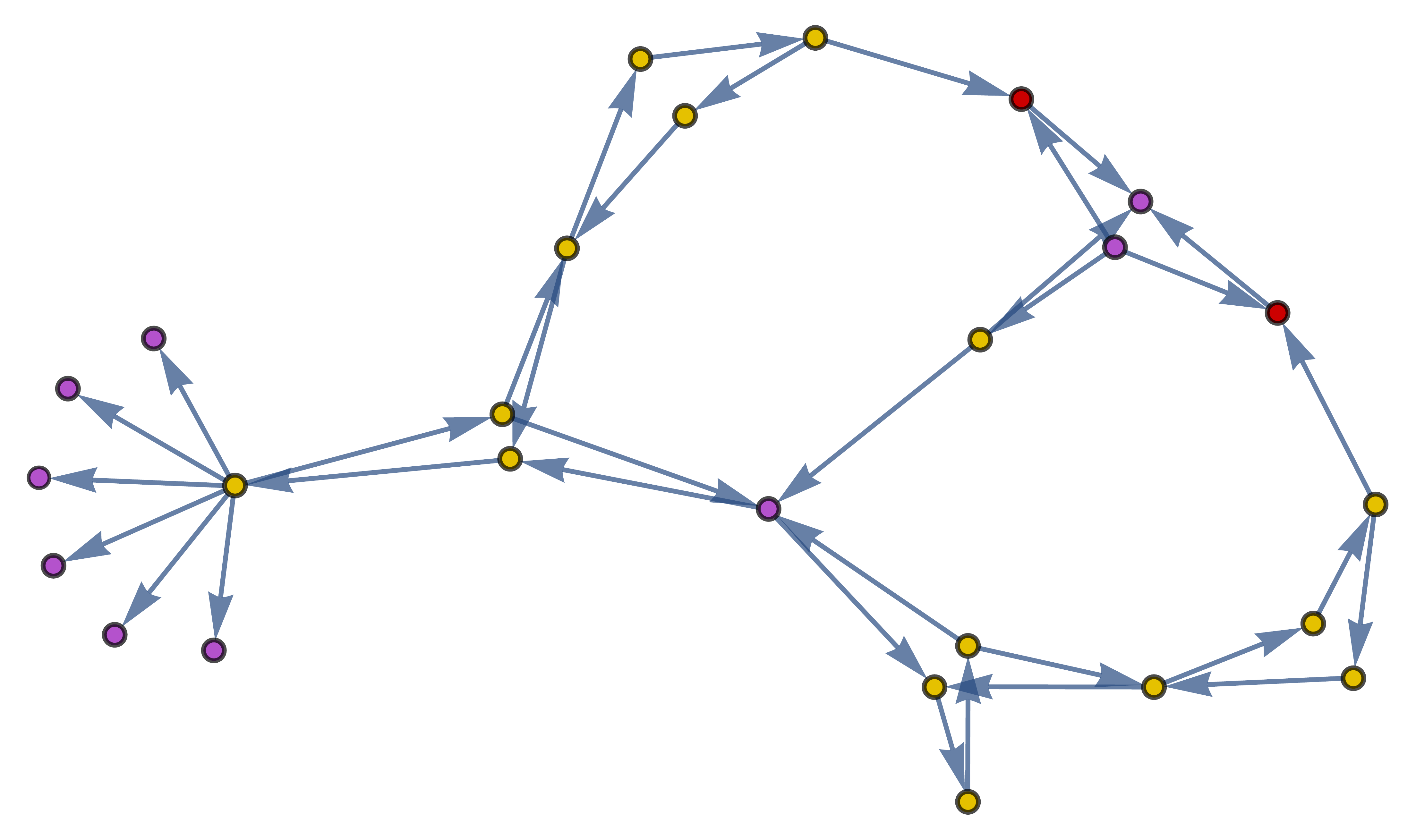}
\caption{Left: Metabolic network of the eukaryotic organism Arabidopsis Thaliana \cite{UND}.  Right: Subnetwork of left network. In both latently symmetric vertices are colored red. Yellow vertices have been drawn with all their original connections intact whereas purple vertices are missing edges from the original graph.}\label{Euk}
\end{figure}
\end{example}

Currently it is an open question whether latently symmetric vertices have similar or complementary functions within a network. The answer is likely that both are possible and is presumably network dependent. An important point is that once a latent symmetry has been found, an expert in the field which studies the specific network may be able to better answer these questions.

It is also worth emphasizing that both of the real networks we consider in this section have symmetries at different scales.  That is, both have standard and latent symmetries which are not standard. We can think of this as symmetries at multiple levels which leads to what one could refer to as a \emph{hierarchy of symmetries}. It is currently an open question as to how such symmetries might be distributed at various scales through a typical real network.

Going through real-network data, we find it more difficult to find examples of undirected networks than directed networks with latent symmetries.  In fact, these symmetries appear to become more rare as the number of vertices in the undirected network gets large. We explore this further using numerical simulations in section \ref{sec:5}.

\section{Eigenvector Centrality}\label{sec:4}

In the previous section we presented examples of latent symmetries in real-world networks. We now present evidence to support our claim that latent symmetries capture some type of hidden structure in a network.  Specifically we prove that when two vertices are latently symmetric they must have the same \emph{eigenvector centrality}, which is a standard measure of how important a vertex is compared to the other vertices in the network. This result suggests that the notion of a latent symmetry is indeed a natural extension of the standard notion of symmetry since vertex symmetries in the standard sense have the same eigenvector centrality and is therefore an important structural concept that can be used to analyze real networks.

Eigenvector centrality is a widely used metric in network analysis \cite{newman2010networks}. In fact, it is the basic principle used by ``Google'' to rank the webpages in the World Wide Web. It is calculated  by ranking the vertices by the value of the corresponding entry in the leading eigenvector of the network's adjacency matrix, where the leading eigenvector is the eigenvector associated with the matrix' largest eigenvalue. Not all graphs have a leading eigenvector. However, essentially all real world networks satisfy the conditions of the Perron-Frobenius theorem which guarantees the existence of a leading eigenvector for the network \cite{newman2010networks}.

To define eigenvector centrality, suppose a network $G$ is represented by a matrix $M$, and $\lambda_0$ is its largest eigenvalue with eigenvector $\mathbf{x}$. The \emph{eigenvector centrality} of a vertex $i\in G$ is the $i^{th}$ entry in $\mathbf{x}$, or $x_i$.

In order to extend this concept of eigenvector centrality to reduced networks, we first need to extend the notion of eigenvectors to the class of matrices with rational function entries. If $R(\lambda)=\mathcal{R}_S(M)$ is an $|S|\times|S|$ matrix with rational function entries and $\lambda_0\in \sigma(M)-\sigma(M_{\bar{S}\bar{S}})$ then $R(\lambda_0)$ is defined, that is each rational function in the matrix $R(\lambda)$ is defined at $\lambda_0$. Hence, $R(\lambda_0)$ is simply an $|S|\times |S|$ real-valued matrix \cite{thebook}.  We say the vector $\mathbf{v}\in \mathbb{C}^{n\times 1}$ is an eigenvector corresponding to $\lambda_0$ if $(R(\lambda_0)-\lambda_0 \mathbf{I})\mathbf{v}=0$.  We note that if the network we are considering is strongly connected with non-negative edge weights then $R(\lambda_0)$ is always defined and $\lambda_0=\rho(M)$, where $\rho(M)=\max \{|\lambda|:\lambda \in \sigma (M) \}$ is the \emph{spectral radius} of $M$ \cite{horn1990matrix}.

It is easy to show that vertices always have the same eigenvector centrality if there is a standard symmetry between them \cite{koschutzki2005centrality}. As previously mentioned, one interesting property of latent symmetries is that if two vertices in a network are latently symmetric, then they will also have the same eigenvector centrality.  That is, using the metric of eigenvector centrality, latently symmetric vertices have the same importance in the network.  This suggests that latent symmetries reveal something as important as the presence of a standard symmetry in the underlying structure of a network.

To prove this result we will first need the following theorem that relates the eigenvectors of $\mathcal{R}_S(M)$ to the eigenvectors of $M$, which is found in \cite{duarte2015eigenvectors}.

\begin{thm}\label{thm:reduction}\textbf{(Theorem 1 in \cite{duarte2015eigenvectors})} Suppose $M\in\mathbb{R}^{n\times n}$ and $S\subseteq N$. If $(\lambda,\mathbf{v})$ is an eigenpair of $M$ and $\lambda\notin \sigma(M_{\bar{S}\bar{S}})$ then $(\lambda,\mathbf{v}_S)$ is an eigenpair of $\mathcal{R}_S(M)$,  where $\mathbf{v}_S$ is the \emph{projection} of $\mathbf{v}$ onto $S$, i.e. $\mathbf{v}_S$ are the components of $\mathbf{v}$ indexed by $S$.
\end{thm}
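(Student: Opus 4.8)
The plan is to exploit the block structure of the eigenvector equation $M\mathbf{v} = \lambda \mathbf{v}$ induced by the partition $N = S \cup \bar{S}$, which is exactly the structure the definition of $\mathcal{R}_S(M)$ is designed to capture via a Schur-complement elimination. First I would permute the coordinates so that the indices in $S$ precede those in $\bar{S}$ (this is only notational and does not affect the eigenpair), writing $\mathbf{v} = (\mathbf{v}_S, \mathbf{v}_{\bar{S}})^{\mathsf{T}}$. The equation $M\mathbf{v} = \lambda \mathbf{v}$ then splits into the two block equations
\[
M_{SS}\mathbf{v}_S + M_{S\bar{S}}\mathbf{v}_{\bar{S}} = \lambda \mathbf{v}_S, \qquad M_{\bar{S}S}\mathbf{v}_S + M_{\bar{S}\bar{S}}\mathbf{v}_{\bar{S}} = \lambda \mathbf{v}_{\bar{S}}.
\]

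Next, since $\lambda \notin \sigma(M_{\bar{S}\bar{S}})$, the matrix $M_{\bar{S}\bar{S}} - \lambda I$ is invertible, so I would solve the second block equation for the ``eliminated'' coordinates, obtaining
\[
\mathbf{v}_{\bar{S}} = -(M_{\bar{S}\bar{S}} - \lambda I)^{-1} M_{\bar{S}S}\mathbf{v}_S.
\]
Substituting this expression into the first block equation and collecting terms yields
\[
\big(M_{SS} - M_{S\bar{S}}(M_{\bar{S}\bar{S}} - \lambda I)^{-1}M_{\bar{S}S}\big)\mathbf{v}_S = \lambda \mathbf{v}_S,
\]
and the matrix on the left is precisely $\mathcal{R}_S(M)$ evaluated at $\lambda$. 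By the definition of an eigenpair for a rational-function matrix recalled just before the statement, this is exactly the assertion that $\big(\mathcal{R}_S(M)(\lambda) - \lambda I\big)\mathbf{v}_S = 0$.

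The one genuine subtlety — and the step I expect to require the most care — is verifying that $\mathbf{v}_S \neq 0$, so that $\mathbf{v}_S$ is an honest eigenvector rather than the zero vector. Here I would argue by contradiction: if $\mathbf{v}_S = 0$, the second block equation collapses to $M_{\bar{S}\bar{S}}\mathbf{v}_{\bar{S}} = \lambda \mathbf{v}_{\bar{S}}$; since $\lambda \notin \sigma(M_{\bar{S}\bar{S}})$, this forces $\mathbf{v}_{\bar{S}} = 0$ as well, making $\mathbf{v} = 0$ and contradicting that $(\lambda, \mathbf{v})$ is an eigenpair. This is the point where the hypothesis $\lambda \notin \sigma(M_{\bar{S}\bar{S}})$ is invoked a second time — the first being to guarantee invertibility of $M_{\bar{S}\bar{S}} - \lambda I$ — and it is worth isolating explicitly so the reader sees both roles of the assumption. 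Apart from this nonvanishing check, the argument is a routine Schur-complement manipulation that requires no estimates or limiting arguments.
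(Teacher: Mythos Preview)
Your argument is correct and is exactly the standard Schur-complement elimination one expects here; the block manipulation and the nonvanishing check for $\mathbf{v}_S$ are both handled properly. Note, however, that the paper does not actually give its own proof of this statement: it is quoted verbatim as Theorem~1 of \cite{duarte2015eigenvectors} and invoked as a black box, so there is no in-paper proof to compare against. Your write-up would serve perfectly well as a self-contained proof were one to be included.
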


This theorem states that under an isospectral reduction the eigenvectors of the matrix are preserved in the sense that the eigenvectors of the reduced matrix are simply the projection of the original eigenvector onto the vertices the network was reduced over.

In order to show that two latently symmetric vertices have the same eigenvector centrality, we reduce the graph over these vertices to create a graph that contains a symmetry.  The leading eigenvector of the reduced graph can then be related the eigenvectors of the original graph using the following result.

\begin{thm}\label{thm:EC}\textbf{(Eignvector Centrality and Latent Symmetries)}
Let $G=(V,E,\omega)$ be a graph (directed or undirected) with nonnegative edge weights which is strongly connected. If there exists a set $L\subseteq V$ of vertices that are latently symmetric, then these vertices all have the same eigenvector centrality.
\end{thm}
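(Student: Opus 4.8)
The plan is to reduce the statement to the pairwise case and then transport the Perron--Frobenius argument for standard symmetries through the isospectral reduction. First I would note that equality of eigenvector centrality is transitive, so it suffices to prove that any two latently symmetric vertices $a,b\in L$ satisfy $x_a=x_b$, where $\mathbf{x}$ is the Perron eigenvector of $M=M(G)$ associated with $\lambda_0=\rho(M)$. By the definition of latent symmetry there is a proper subset $S\subseteq V$ with $a,b\in S$ such that $a$ and $b$ are symmetric, in the standard sense, in the reduced graph $\mathcal{R}_S(G)$; write $R(\lambda)=\mathcal{R}_S(M)$ for its rational-function adjacency matrix. (If $S=V$ the two vertices are standard-symmetric and the claim is classical.)

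The first substantive step is to certify that $\lambda_0$ may be substituted into the reduction so that Theorem~\ref{thm:reduction} applies. Since $G$ is strongly connected with nonnegative weights, $M$ is irreducible and nonnegative, and a standard Perron--Frobenius fact gives that every proper principal submatrix has strictly smaller spectral radius, so $\rho(M_{\bar S\bar S})<\rho(M)=\lambda_0$ and in particular $\lambda_0\notin\sigma(M_{\bar S\bar S})$. Theorem~\ref{thm:reduction} then shows that $(\lambda_0,\mathbf{x}_S)$ is an eigenpair of $R(\lambda_0)$, where $\mathbf{x}_S$ is the projection of $\mathbf{x}$ onto the coordinates indexed by $S$; because $\mathbf{x}$ is the strictly positive Perron vector, $\mathbf{x}_S>0$.

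Next I would show that $R(\lambda_0)$ is itself a nonnegative irreducible matrix, which forces $\mathbf{x}_S$ to be its Perron vector. Writing $R(\lambda_0)=M_{SS}+M_{S\bar S}(\lambda_0 I-M_{\bar S\bar S})^{-1}M_{\bar S S}$ and using $\lambda_0>\rho(M_{\bar S\bar S})$, the resolvent expands as the convergent Neumann series $(\lambda_0 I-M_{\bar S\bar S})^{-1}=\sum_{k\ge 0}\lambda_0^{-(k+1)}M_{\bar S\bar S}^{\,k}\ge 0$, so $R(\lambda_0)\ge 0$ entrywise. Irreducibility then follows because $G$ is strongly connected and, all contributions being nonnegative, no cancellation can occur: any path in $G$ between two vertices of $S$ projects to a nonzero walk in $R(\lambda_0)$. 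Consequently the positive eigenvector of the irreducible nonnegative matrix $R(\lambda_0)$ is unique up to scaling, and $\mathbf{x}_S$ is exactly that Perron vector.

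Finally I would exploit the symmetry. The standard automorphism $\phi$ of $\mathcal{R}_S(G)$ with $\phi(a)=b$ preserves the rational-function weights, so its permutation matrix $P$ satisfies $PR(\lambda)P^{-1}=R(\lambda)$ identically in $\lambda$, hence at $\lambda_0$. Then $P\mathbf{x}_S$ is again a positive eigenvector of $R(\lambda_0)$ for the eigenvalue $\lambda_0$, so uniqueness gives $P\mathbf{x}_S=c\,\mathbf{x}_S$ for some $c>0$; comparing coordinate sums, which a permutation preserves, forces $c=1$, i.e.\ $\mathbf{x}_S$ is $\phi$-invariant. In particular $(\mathbf{x}_S)_a=(\mathbf{x}_S)_b$, and since $a,b\in S$ these equal $x_a$ and $x_b$. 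Running this argument over every pair in $L$ and invoking transitivity yields the theorem. I expect the main obstacle to be the two structural facts about $R(\lambda_0)$, namely its entrywise nonnegativity and especially its irreducibility, since these are precisely what let the Perron--Frobenius uniqueness argument be transported from $M$ to the reduced matrix; the rest is bookkeeping.
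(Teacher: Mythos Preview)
Your proof is correct and follows essentially the same route as the paper: Perron--Frobenius gives a simple $\lambda_0=\rho(M)$ with $\lambda_0\notin\sigma(M_{\bar S\bar S})$ (the paper isolates this as a separate lemma), Theorem~\ref{thm:reduction} projects the Perron vector to $\mathbf{x}_S$, and the standard symmetry in $\mathcal{R}_S(G)$ then forces the relevant entries of $\mathbf{x}_S$, hence of $\mathbf{x}$, to coincide. Your write-up is in fact more explicit than the paper's, supplying the Neumann-series nonnegativity and the irreducibility of $R(\lambda_0)$ where the paper simply asserts that the isospectral reduction of a strongly connected graph remains strongly connected.
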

\begin{proof}
Because $G$ is strongly connected, the Perron-Frobenius Theorem, guarantees that $M=M(G)$ has a largest simple eigenvalue, $\lambda_0 \in \mathbb{R}$, which is equal to the spectral radius $\rho(M)$. Lemma \ref{v} in the Appendix  guarantees that $\lambda_0\notin \sigma(M_{\bar{S}\bar{S}})$ for any vertex set $S$.  The eigenvector $\mathbf{v}$ associated to $\lambda_0$ is by definition the leading eigenvector of $M$. Also, the isospectral reduction of a strongly connected graph  must also be strongly connected, thus $\mathcal{R}_S(G)|_{\lambda=\rho(M)}$ also has a leading eigenvector.

Now recall that a set $L\subseteq V$ of vertices of $G$ is latently symmetric if when $G$ is reduced over a set $S\supseteq L$, the vertices in $L$ are symmetric in the resulting reduced graph. Thus
all vertices in $L$ have the same eigenvector centrality in $\mathcal{R}_S(G)$ since symmetric vertices have the same eigenvector centrality, meaning each has the same value in the leading eigenvector, $\mathbf{v}_S$.  Next we use Theorem \ref{thm:reduction}. Since symmetric vertices correspond to entries in $\mathbf{v}_S$ with the same value, they must also correspond to equal entries in $\mathbf{v}$, the leading  eigenvector for the original matrix. This is because $\mathbf{v}_S$ is simply a projection of $\mathbf{v}$.  Thus vertices which are latently symmetric must have the same eigenvector centrality.

\end{proof}
\begin{figure}
\begin{center}
\begin{overpic}[scale=.4]{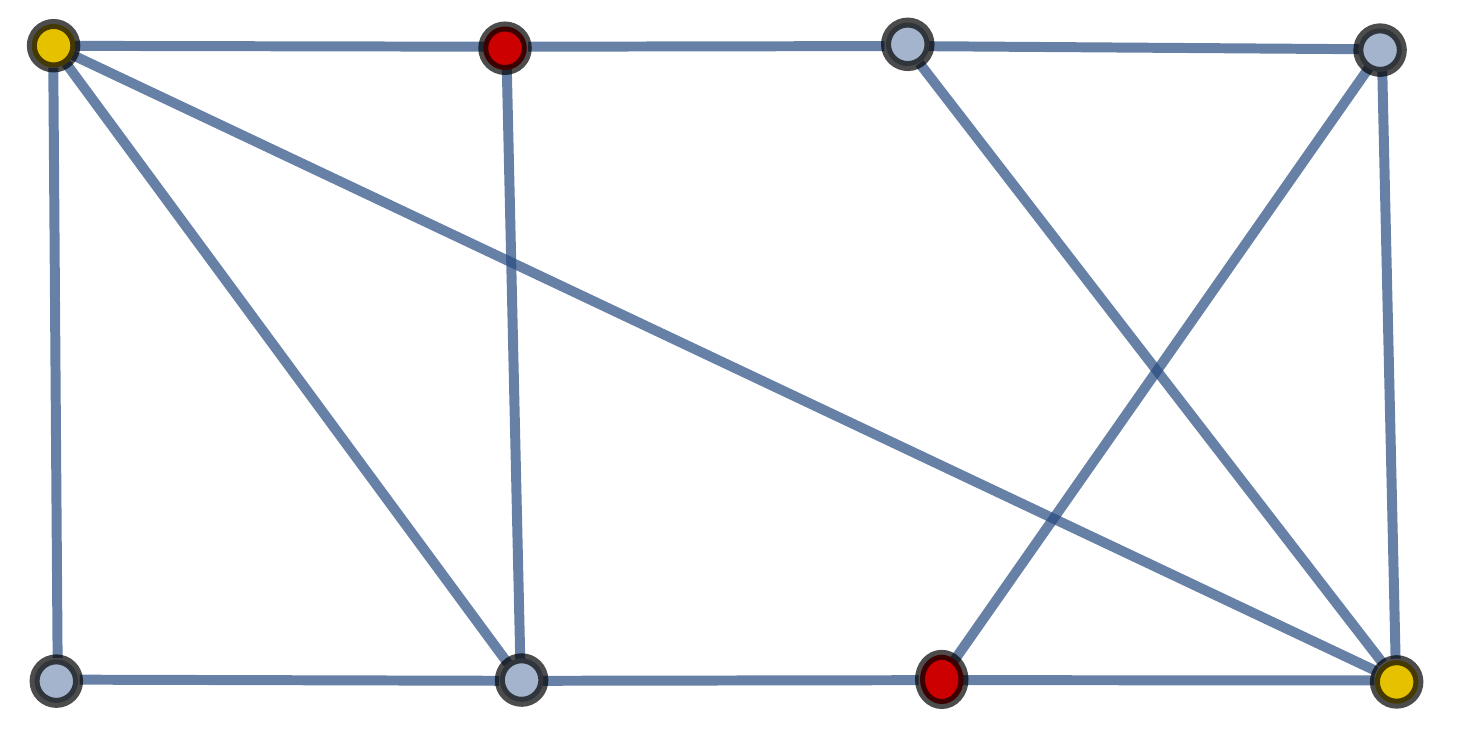}
\put(91,43){\Huge $\bullet$}
\put(59,44){\Huge $\bullet$}
\put(50,-5){$G$}
\put(97,49){$b$}
\put(65.7,50){$a$}
\end{overpic}\qquad
\raisebox{.2\height}{
\begin{overpic}[scale=.4]{0rednl.pdf}
\put(29,13.1){\Huge $\bullet$}
\put(64,13.1){\Huge $\bullet$}
\put(40,-5){$\mathcal{P}_{\{a,b\}}(G)$}
\put(34,19){$a$}
\put(62,18.5){$b$}
\put(8,29.5){$ 1.3399$}
\put(77,29.5){$ 1.3399$}
\put(44,11){$ 2.014$}
\end{overpic}}
\caption{(Left) The two large black vertices $a$ and $b$ in the graph $G$ are not latently symmetric, but have the same eigenvector centrality.  Red and yellow vertices are represent two latent symmetries in the graph. (Right) The Perron complement graph of $G$ over $\{a,b\}$, $\mathcal{P}_{\{a,b\}}(G)$.}\label{fig:new}
\end{center}
\end{figure}
If two vertices are latently symmetric then by Theorem \ref{thm:EC}, both vertices have the same eigenvector centrality.  However, it is worth emphasizing that even though being latently symmetric is a sufficient condition for having the same eigenvector centrality, it is \emph{not} a necessary condition. For example, consider the graph in Figure \ref{fig:new}.  The two large black vertices have the same eigenvector centrality, although these vertices are \emph{not} latently symmetric.  We can, however, strengthen the conclusion of Theorem \ref{thm:EC} for the case of \emph{undirected} graphs.  We do this by searching for symmetry in the graph associated with the \emph{Perron complement} of the networks adjacency matrix (as opposed to the isospectral reduction of the network which was defined in section \ref{sec:2}).  The Perron complement of a matrix is defined as follows. \\

\begin{defn}{\textbf{(Perron complement)}} \cite{meyer1989uncoupling}
Let $M\in \mathbb{R}^{n\times n}$ be a nonnegative, irreducible matrix with \emph{spectral radius} $\rho (M)$. If $S\subseteq N$ then the matrix $$\mathcal{P}_S(M)=M_{SS}-M_{S\bar{S}}(M_{\bar{S}\bar{S}}-\rho(M)I)^{-1}M_{\bar{S}S}\in \mathbb{R}^{|S|\times |S|}$$
is the \emph{Perron complement} of M over S.
\end{defn}

Note that $\mathcal{P}_S(M)$ is the isospectral reduction of $M$ over $S$ in which we let $\lambda=\rho(M)$ so that ${\mathcal{P}}_S(M)$ is a real-valued matrix.
Regarding the Perron complement, the following results hold.
\begin{thm}\label{thm:irr}  \textbf{(Theorem 2.2 and 3.1 in \cite{meyer1989uncoupling})}
Let $M\in \mathbb{R}^{n\times n}$ be a nonnegative irreducible matrix and let $S\subseteq N$. Then
\\
(i) the Perron complement $\mathcal{P}_S(M)$ is also a non-negative and irreducible matrix with the same spectral radius, i.e. $\rho (M)=\rho(\mathcal{P}_S(M))$; and \\
(ii) if $\mathbf{v}$ is the leading eigenvector of $M$ then its projection $\mathbf{v}_S$ is the leading eigenvector of $\mathcal{P}_S(M)$.
\end{thm}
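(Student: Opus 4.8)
The plan is to write $r=\rho(M)$ throughout and to begin by recording the one preliminary fact that drives everything: the principal submatrix $M_{\bar S\bar S}$ has spectral radius strictly less than $r$. I would get this from the strict monotonicity of the Perron root for irreducible nonnegative matrices: embedding $M_{\bar S\bar S}$ back into an $n\times n$ matrix $\tilde B$ by padding with zeros gives $0\le \tilde B\le M$ with $\tilde B\ne M$ (irreducibility forces $M_{\bar S S}\ne 0$), hence $\rho(M_{\bar S\bar S})=\rho(\tilde B)<\rho(M)=r$. Consequently $M_{\bar S\bar S}-rI$ is invertible and, via the Neumann series,
$$
(rI-M_{\bar S\bar S})^{-1}=\sum_{k\ge 0} r^{-(k+1)}M_{\bar S\bar S}^{\,k}\ge 0,
$$
so $-(M_{\bar S\bar S}-rI)^{-1}=(rI-M_{\bar S\bar S})^{-1}$ is nonnegative. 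For the nonnegativity claim in (i) I would then simply rewrite
$$
\mathcal{P}_S(M)=M_{SS}+M_{S\bar S}(rI-M_{\bar S\bar S})^{-1}M_{\bar S S},
$$
which is a sum of products of nonnegative matrices and hence nonnegative.

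For the spectral-radius equality in (i) together with (ii), I would take the strictly positive Perron vector $\mathbf{v}$ of $M$ and partition it as $\mathbf{v}=(\mathbf{v}_S,\mathbf{v}_{\bar S})$. Writing $M\mathbf{v}=r\mathbf{v}$ in block form yields $M_{\bar S S}\mathbf{v}_S+M_{\bar S\bar S}\mathbf{v}_{\bar S}=r\mathbf{v}_{\bar S}$, so that $\mathbf{v}_{\bar S}=(rI-M_{\bar S\bar S})^{-1}M_{\bar S S}\mathbf{v}_S$; substituting this into the first block row $M_{SS}\mathbf{v}_S+M_{S\bar S}\mathbf{v}_{\bar S}=r\mathbf{v}_S$ gives exactly $\mathcal{P}_S(M)\mathbf{v}_S=r\mathbf{v}_S$. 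Since $\mathbf{v}$ is strictly positive, $\mathbf{v}_S$ is a strictly positive eigenvector of the nonnegative matrix $\mathcal{P}_S(M)$; by Perron--Frobenius a nonnegative matrix possessing a positive eigenvector has that eigenvalue equal to its spectral radius, so $\rho(\mathcal{P}_S(M))=r=\rho(M)$ and $\mathbf{v}_S$ is its leading eigenvector. This settles (ii) and the spectral-radius half of (i) at once.

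The remaining and most delicate point is irreducibility of $\mathcal{P}_S(M)$, which does \emph{not} follow from nonnegativity plus a positive eigenvector alone (a positive diagonal matrix has both without being irreducible). Here I would expand the inverse as above to obtain
$$
\mathcal{P}_S(M)=M_{SS}+\sum_{k\ge 0} r^{-(k+1)}M_{S\bar S}M_{\bar S\bar S}^{\,k}M_{\bar S S},
$$
and read the $(i,j)$ entry combinatorially: it is a positive combination of weights of walks in $G$ from $i\in S$ to $j\in S$ whose intermediate vertices all lie in $\bar S$. Thus the directed graph of $\mathcal{P}_S(M)$ has an arc $i\to j$ precisely when $G$ contains such an $S$-to-$S$ walk with interior in $\bar S$. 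Because $M$ is irreducible its graph is strongly connected, so for any $i,j\in S$ there is a directed path from $i$ to $j$ in $G$; cutting this path at its successive visits to vertices of $S$ decomposes it into segments of exactly the above type, each contributing an arc in the graph of $\mathcal{P}_S(M)$, and concatenating these arcs yields a path from $i$ to $j$ in that graph. Hence the graph of $\mathcal{P}_S(M)$ is strongly connected and $\mathcal{P}_S(M)$ is irreducible. I expect this walk-decomposition step to be the main obstacle, since it is where the block algebra must be translated into a genuinely graph-theoretic statement; everything else reduces to Perron--Frobenius theory and the nonnegativity of the Neumann series.
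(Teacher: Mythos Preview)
Your argument is correct and complete. Note, however, that the paper does not supply its own proof of this theorem: it is quoted wholesale as Theorems~2.2 and~3.1 of Meyer~\cite{meyer1989uncoupling}, with the sole added remark that part~(ii) is ``analogous (and a result of)'' the eigenvector-projection result Theorem~\ref{thm:reduction}. Your block-substitution derivation of $\mathcal{P}_S(M)\mathbf{v}_S=r\mathbf{v}_S$ is exactly that specialisation (evaluate the isospectral reduction at $\lambda=r$, which is permitted because $r\notin\sigma(M_{\bar S\bar S})$), so on part~(ii) you and the paper's hint coincide.

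Your preliminary inequality $\rho(M_{\bar S\bar S})<r$, obtained by padding with zeros and invoking strict monotonicity of the Perron root, is precisely the content of the paper's Lemma~\ref{v} in the Appendix; the paper proves it by a slightly different route (writing $A=C_\epsilon+D_\epsilon$ and applying two results from Horn--Johnson), but the statement is identical. The remaining ingredients you supply---nonnegativity of $\mathcal{P}_S(M)$ via the Neumann expansion of $(rI-M_{\bar S\bar S})^{-1}$, and irreducibility via decomposing any $G$-path between vertices of $S$ at its successive visits to $S$---are the standard arguments and are correctly carried out. Your explicit warning that a positive eigenvector alone does not force irreducibility, and the separate combinatorial argument you give for it, are exactly what is required. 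There is nothing further in the paper to compare against beyond the citation to Meyer.
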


Property (i) in Theorem \ref{thm:irr} shows that the spectral radius is unaffected by this reduction. Property (ii) is analogous (and a result of) Theorem \ref{thm:reduction}, which tell us that the eigenvectors in the Perron complement are projections of the eigenvectors of the original matrix.

We note here that analogous to isospectral reduction for graphs in section \ref{sec:2}, we can also define the \emph{Perron complement graph} for a graph $G$ with adjacency matrix $M$ and vertex subset $S$ denoted by $\mathcal{P}_S(G)$, which is the graph whose weighted adjacency matrix is $\mathcal{P}_S(M)$. The Perron complement graph can in some sense be more convenient for analysis since it must have real-valued edge weights, as opposed to the rational function weights which result from isospectral reductions.

However, our interest in the Perron complement in that using the properties of Theorem \ref{thm:irr}, we can give a necessary and sufficient condition under which two vertices of a graph have the same eigenvector centrality.

\begin{thm}\label{thm:centrality2}
\textbf{(Eigenvector Centrality in the Perron Complement Graph)} Let $G$ be an undirected connected graph.  Vertices $i,j$ have the same eigenvector centrality if and only if they are symmetric in $\mathcal{P}_{\{i,j\}}(G)$.
\end{thm}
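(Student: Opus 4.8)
The plan is to exploit that $S=\{i,j\}$ has only two elements, so the Perron complement $P:=\mathcal{P}_{\{i,j\}}(M)$ is a $2\times 2$ matrix and the condition ``$i,j$ are symmetric in $\mathcal{P}_{\{i,j\}}(G)$'' collapses to a single scalar identity. First I would note that since $G$ is undirected, $M=M(G)$ is symmetric, and hence so is $P$: in the defining formula $M_{SS}-M_{S\bar S}(M_{\bar S\bar S}-\rho(M)I)^{-1}M_{\bar S S}$ we have $M_{S\bar S}=M_{\bar S S}^{\mathsf T}$ and $(M_{\bar S\bar S}-\rho(M)I)^{-1}$ symmetric, so the correction term is symmetric. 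Writing $P=\begin{pmatrix} a & b \\ b & d\end{pmatrix}$ with rows and columns indexed by $i,j$, connectedness makes $M$ irreducible, so by Theorem \ref{thm:irr}(i) $P$ is nonnegative and irreducible; for a $2\times 2$ matrix this forces the off-diagonal entry $b>0$. The only nontrivial permutation of $\{i,j\}$ is the transposition $(ij)$, and since $P$ is already symmetric, $(ij)$ is an automorphism of $\mathcal{P}_{\{i,j\}}(G)$ precisely when $P_{ii}=P_{jj}$, i.e. when $a=d$. Thus the theorem reduces to proving that $v_i=v_j$ if and only if $a=d$, where $\mathbf v$ is the leading eigenvector of $M$ and $v_i,v_j$ are its $i$th and $j$th entries (the eigenvector centralities of $i$ and $j$).

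Next I would bring in the eigenvector data. By Theorem \ref{thm:irr}(ii) the projection $\mathbf v_S=(v_i,v_j)^{\mathsf T}$ is the leading eigenvector of $P$ for the eigenvalue $\rho(M)=\rho(P)=:\rho$, and Perron--Frobenius gives $v_i,v_j>0$. The equation $P\mathbf v_S=\rho\mathbf v_S$ reads
\[
(\rho-a)v_i=b\,v_j,\qquad (\rho-d)v_j=b\,v_i .
\]
For the forward implication, if $v_i=v_j$ then dividing each equation by $v_i>0$ yields $\rho-a=b$ and $\rho-d=b$, whence $a=d$. For the converse, if $a=d$ then subtracting the two equations gives $(\rho-a+b)(v_i-v_j)=0$; since the first equation with $b,v_j>0$ forces $\rho-a>0$, we have $\rho-a+b>0$, so $v_i=v_j$. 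Equivalently, when $a=d$ the swap matrix commutes with $P$, so $(v_j,v_i)^{\mathsf T}$ is again a leading eigenvector, and simplicity of the Perron root together with positivity of the entries forces $v_i=v_j$.

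I expect the only real obstacles to be the two structural facts that legitimize the reduction to a scalar identity: verifying that $P$ is genuinely symmetric, which is exactly where undirectedness enters and without which $(ij)$ need not be an automorphism even when the diagonal entries agree; and securing the strict positivity $b>0$ and $v_i,v_j>0$, on which both the division step and the non-degeneracy of $\rho-a+b$ depend. Both rest on irreducibility of $M$ (from connectedness) together with Theorem \ref{thm:irr} and the Perron--Frobenius theorem; the remainder is the routine $2\times 2$ computation above.
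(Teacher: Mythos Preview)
Your proposal is correct and follows essentially the same route as the paper: reduce to the $2\times 2$ Perron complement, use undirectedness to make its off-diagonal entries equal, and read off both directions from the eigenvector equation $P\mathbf v_S=\rho\mathbf v_S$ together with Perron--Frobenius positivity. The paper handles the implication ``symmetric in $\mathcal{P}_{\{i,j\}}(G)\Rightarrow$ same centrality'' by quoting the general fact that symmetric vertices share eigenvector centrality plus the projection property, whereas you do the direct $2\times 2$ computation for both directions and are more explicit about why $b>0$ and $\rho-a>0$; these are cosmetic differences, not a different argument.
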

Theorem \ref{thm:centrality2} is proved in the Appendix. As an example of this theorem the large black vertices $a$ and $b$ in Figure \ref{fig:new} (left) have the same eigenvector centrality. In the Perron complement graph, $\mathcal{P}_{\{a,b\}}(G)$, shown in Figure \ref{fig:new} (right), we can see a symmetry appears between these two vertices.

\begin{remark}
We note here that Theorem \ref{thm:centrality2} is stated for two vertices and does not extend to any other set of vertices. Specifically if two vertices of some network $G$ have the same eigenvector centrality then they are symmetric in the Perron complement graph of $G$ over these two vertices.  The conclusion does not hold for a larger set of vertices which all have the same eigenvector centrality. By this we mean that if a network $G$, has a set $S$ of three or more vertices which all have the same eigenvector centrality, then there is no guarantee that $\mathcal{P}_S(G)$ contains a symmetry.  We also note here that the above theorem is only true in general for undirected graphs. For instance vertices 2, and 6 for the graph $G$ in Figure \ref{fig:directedex} have the same eigenvector centrality but are not symmetric in $\mathcal{P}_{\{2,3\}}(G)$.

\end{remark}
The reason we use isospectral reductions instead of the comparatively simple Perron complement is that by using isospectral reductions we have a smaller class of symmetries.  If there is a symmetry in the Perron complement graph $\mathcal{P}_S(G)$ this does not always correspond to a latent symmetry in $\mathcal{R}_S(G)$.  That is, the Perron complement cannot be used to find latent symmetries in general. Further, we do not gain any new information regarding network symmetries from the Perron complement graph for directed graphs where Theorem \ref{thm:centrality2} does not hold.

\section{Latent Symmetries and Network Growth Models}\label{sec:5}

Real-world networks are constantly evolving and typically growing (see \cite{gross2009adaptive} for a review of the evolving structure of networks). A number of network formation models have been proposed to describe the type of growth observed in these networks.  The purpose of this section is to demonstrate that, like standard symmetries, latent symmetries appear to be a hallmark of such networks.  To determine how likely it is to find a latent symmetry in a given network, we preform a number of numerical experiments. In these experiments we count how many latent symmetries occur in randomly generated graphs. We focus these numerical experiments on directed networks, where latent automorphisms are more likely to occur.

The most well-known class of network growth models are those related to the Barab\'asi-Albert model \cite{barabasi1999emergence} and its predecessor the Price model \cite{price1976general}. In these models elements are added one by one to a network and are preferentially attached to vertices with high degree, i.e. to vertices with a high number of neighbors or some variant of this process \cite{albert2000topology}, \cite{dorogovtsev2000scaling}, \cite{krapivsky2001degree}. These models are devised to create networks that exhibit some of the most widely-observed features found in real networks such as scale-free degree distributions.  We choose to generate networks using this theoretical model to understand whether latent symmetries are likely to appear in a real-network setting.

We choose a two-parameter growth model which is a variation of the one originally devised by Price (see \cite{newman2010networks}, section 14.1).  When generating a graph, we begin with a \emph{complete graph} on 3 vertices and add vertices and edges iteratively.  At each step we add one vertex and two edges.  Both of these edges connect to the new vertex on one end.  The other end of each new edge attaches to a vertex in the existing graph with a probability proportional to the number of edges already connected to it plus some intrinsic weight $\alpha$ given to each vertex, which is a parameter we vary in our experiment.  Vertices that already have many adjacent edges are more likely to attach to the new vertex at each step. The $\alpha$ parameter changes how strongly the new vertices are preferentially attached to vertices with larger degrees.  The idea is that as $\alpha$ gets larger the graph is generated in a less preferentially-attached way.

In these experiments, we generate 1000 graphs with 180 vertices at each value of $\alpha$. We then randomly determined which direction each edge points, where there is a 1/5 probability the edge will point in both directions, otherwise it points in only one direction.  We use this to method increases the connectivity of the resulting network. The reason for these probabilities is that this makes each edge half as likely to be directed in both directions compared to being directed in just one direction. That is, an edge is directed in one way with probability 2/5, and the other way with 2/5 probability, and in both ways with probability 1/5.  Once a graph is generated, we reduce the graph to its largest strongly connected component since networks are often analyzed at this level. The result is a collection of graphs with a mean number of about 137 vertices.  Finally, we count how many of the resulting graphs have at least one  standard symmetry and how many have at least one latent symmetry.

Figure \ref{fig:plots}  plots the percentage of graphs generated for each value of $\alpha$ which have a standard symmetry (left) and latent symmetry (right).  We first notice that both plots essentially decrease as $\alpha$ increases, demonstrating that as graphs are generated in a way less like preferential attachment, we find less symmetry at any scale. Though there are fewer total latent symmetries than real symmetries, they both follow the same general trend, suggesting the process of creating standard and latent symmetries are correlated.  This suggests that mechanisms which allow for a greater number of standard symmetries also allow for the formation of more latent symmetries. Exactly what this mechanism is and how it operates even in these experiments is an interesting and open question. One possibility is that having regions of low edge density among collections of vertices creates an environment where symmetries are more likely to occur randomly. Thus a method which generates a network using preferential attachment concentrates most of the connection around vertices with high degree (hubs), allowing other vertices to have a lower density of edges.

\begin{figure}
\includegraphics[scale=.4]{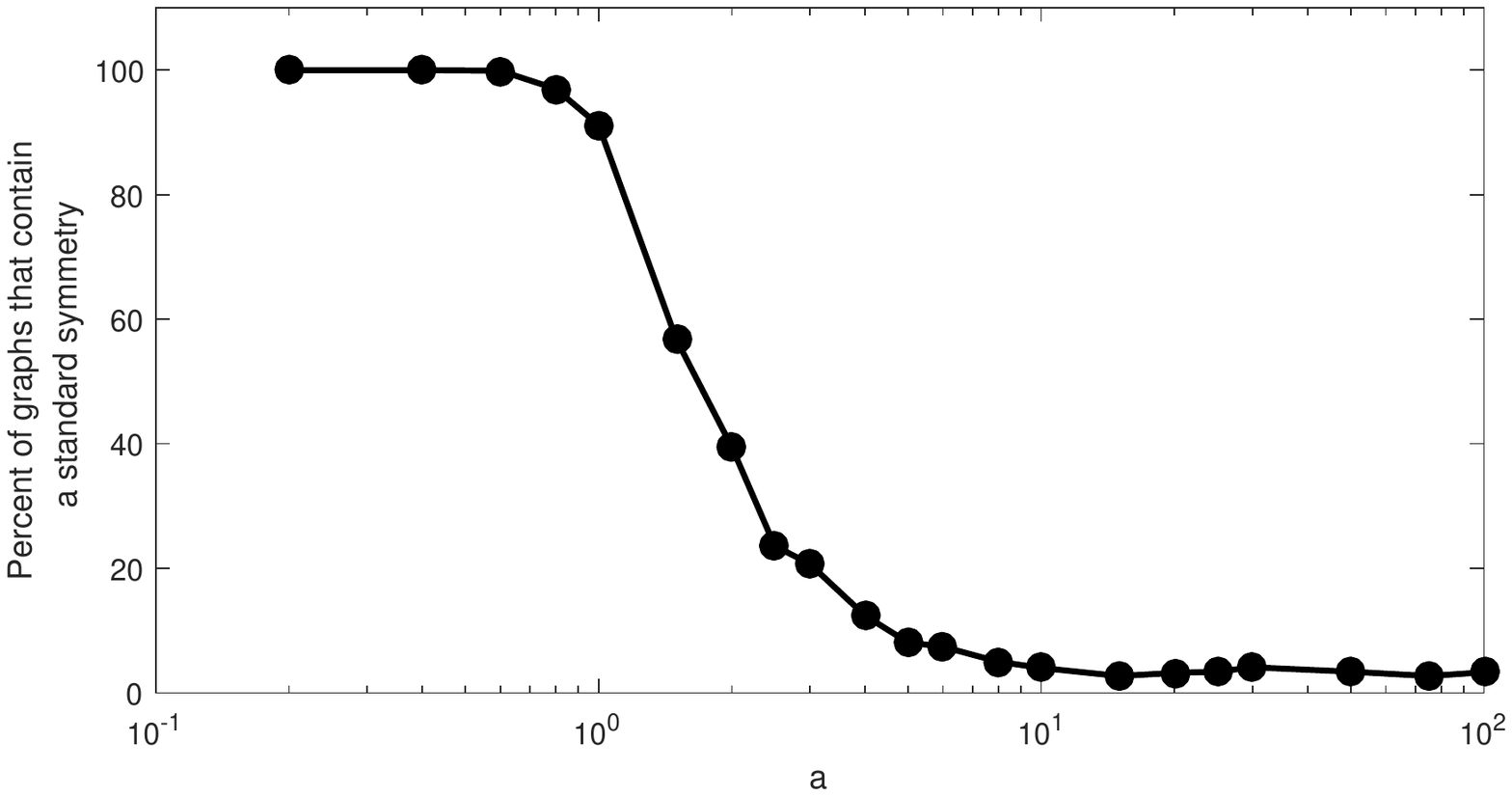}
\raisebox{.04\height}{  \includegraphics[scale=.34]{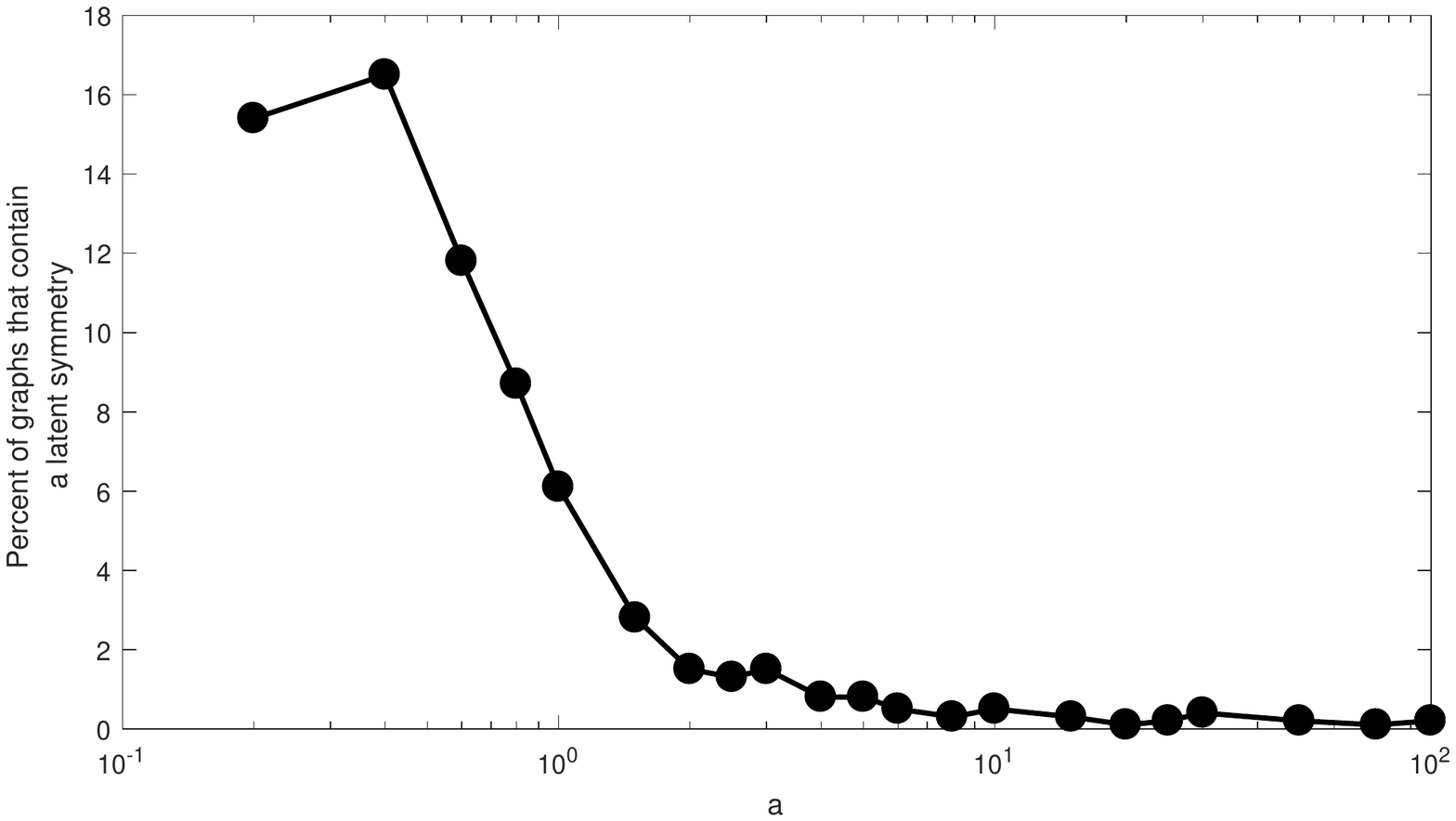}}\caption{The left figure plots the percentage of graphs which were generated using preferential attachment that contain a standard symmetry.  The horizontal axis (plotted logarithmically) gives different values of $a$, the parameter which controls how strongly each edge is attached preferentially. The right is the same figure in which the occurrences of latent symmetries are plotted.}\label{fig:plots}
\end{figure}

 To compare the results of our experiments, we do the same experiment by building graphs via the Erd{\H o}s-R\'enyi method \cite{erdos1959random}.  This is a method for generating random graphs where there is a fixed probability $\alpha$ that an edge will exist between any two vertices in the graph.  Importantly, it is known that this does not lead to graphs that resemble real-world networks \cite{watts1998collective}. Starting as before with 180 vertices, we generate Erd{\H o}s-R\'enyi graphs in which there is an edge between every pair of vertices in the graph with probability $p$. We then follow the same procedure of randomly directing edges and reducing the network to its largest strongly connected component as in the previous experiment where we considered preferential attachment. After generating 1000 graphs for a given value of $p$ we determine what percentage of these graphs have at least one structural symmetry and at least one latent symmetry as before.

The results of this experiments show there is a very narrow range of values for $p$, between $p=.02$ and $p=.03$, for which we find any symmetries and also generate graphs with a substantial largest connected component. For all values of $p$ we found that $5.9\%$ was the highest percentage of these graphs having a standard symmetry, and $0.3\%$ was the highest percentage of these graphs having a latent symmetry.  These numbers are much lower than anything we found using preferential attachment. These experiments were repeated for larger starting vertex numbers and similar results were obtained. This suggests that symmetries, and in particular latent symmetries, are not as likely to be found in nonnetwork-like graphs as they are in graphs that capture some important features of real-world networks.

\begin{remark} We note here that these experiments were only performed on directed networks.  We find that latent symmetries are extremely rare in undirected networks generated using either of the above models for a large number of vertices (more than fifteen). Interestingly, when searching for symmetries in small undirected graphs, we were actually able to find more latent symmetries using the Erd{\H o}s-R\'enyi model than with preferential attachment. Although most networks are much larger than fifteen vertices, it is not understood why the trend observed for large networks should reverse for small networks.

\end{remark}

\section{Conclusion}\label{sec:conclusion}
We have proposed a novel generalization of the notion of a symmetry for a network (graph), which reveals to an extent the latent structure of both real and theoretical networks. We define these latent symmetries to be structural symmetries in an isospectrally reduced version of the original network. Though we showed some simple examples of latent symmetries in a number of simple graphs, we demonstrated they naturally occur in real networks (see Section \ref{sec:3}).  In addition we also define a measure of latency of a symmetry which gives a sense of scale of the symmetry or how deep the symmetry is hidden in the network.  It is worth mentioning that this measure of latency can be difficult to compute as it requires searching through all possible vertex subsets of a graph. An interesting and open question is whether $\mathcal{M}(S)$ can be efficiently computed or approximated.

In the real-world networks we considered, and in the theoretical networks we numerically generated we were able to find latent symmetries that coexisted at various scales within the same network.  This seems to suggest that real-world networks are not only rich with symmetries \cite{Mac2008}, but have what we might term a \emph{hierarchical structure of symmetries} in which symmetries can be found at multiple scales within the network.

In section \ref{sec:4}, we showed that vertices in a network which are latently symmetric also have the same eigenvector centrality.  Thus, in this sense latently symmetric vertices share equal importance in the network. This suggests that latent symmetries are encoding some type of network structure which is invisible to a simple symmetry search. Theorem \ref{thm:centrality2} actually strengthens this result for the undirected case, showing that the Perron complement of the network over two vertices results in a symmetry if and only if those two vertices have the same eigenvector centrality in the original network.

One of the strongest cases for the importance of latent symmetries comes from the numerical study we perform in section \ref{sec:5}, which shows that structural symmetries and latent symmetries are correlated in graphs generated using preferential attachment.  This suggests that we should expect latent symmetries to naturally occur in real networks which form via some form of preferential attachment.

Having demonstrated the potential of latent symmetries as a concept for analyzing the structure of networks, many questions still remain.  For instance, do networks utilize latent symmetries the same way they utilize standard structural symmetries? More specifically, do latently symmetric nodes often have similar functions? Or could they have complimentary functions? If a set of vertices are latently symmetric and some subset of them fail, what happens to the network? i.e. does the network also fail in some way?

In the section \ref{sec:5}, we found latent symmetries are more likely to occur in a network built using a preferential attachment model.  It is natural to ask if there are other models of network growth which also lead to the formation of latent symmetries.  It was also noted that latent symmetries seem to be rare in large undirected networks. There is still work to be done to understand why this is and if there is a model which could result in latent symmetries in undirected graphs. In fact, the seeming dichotomy between directed and undirected graphs, which we find in our numerical and theoretical results, is not well understood and is a source of many open questions.

In summary, there is much more work to do regarding latent symmetries, but from our preliminary work it seems that understanding how and why latent symmetries form could provide clues to how network structures form and are utilized.

\section{Appendix}

In this section we prove Lemma \ref{v} which is needed in the proof of Theorem \ref{thm:EC}.  We also prove Theorem \ref{thm:centrality2} which connects latent symmetries in undirected graphs and the Perron complement graph.

\begin{lem}\label{v}
If $B$ is a nonzero square submatrix of $A$ where $A$ is nonnegative and irreducible, then $\rho(B) < \rho (A)$.
\end{lem}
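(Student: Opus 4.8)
The plan is to read $B$ as a proper principal submatrix of $A$. This is exactly the case needed in Theorem~\ref{thm:EC}, where $B=M_{\bar{S}\bar{S}}$, and it is in fact the only reading under which the conclusion can hold: a general off-diagonal square block may satisfy $\rho(B)=\rho(A)$, as the $1\times 1$ block $(1)$ of the $2\times 2$ permutation matrix shows. So I would write $N=\alpha\cup\bar{\alpha}$ with $\alpha$ the proper, nonempty index set carrying $B=A_{\alpha\alpha}$, and view $A$ in block form with leading block $B$. The strategy is then to compare $A$ with the matrix obtained by deleting everything outside this block, and to invoke the strict monotonicity of the Perron root under entrywise domination by an irreducible matrix.

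Concretely, I would proceed in four steps. First, embed $B$ back into the full index set by defining $\widehat{B}\in\mathbb{R}^{n\times n}$ to agree with $A$ on the sub-block $\alpha\times\alpha$ and to vanish elsewhere; padding with zeros only appends zero eigenvalues, so $\rho(\widehat{B})=\rho(B)$. Second, observe that $0\le\widehat{B}\le A$ entrywise, since $A\ge 0$ and $\widehat{B}$ merely zeroes out the entries of $A$ touching $\bar{\alpha}$. Third --- and here is where irreducibility enters --- note that $\widehat{B}\neq A$: because $\alpha$ is a proper nonempty subset and $A$ is irreducible, $A$ cannot be block-diagonal with respect to the partition $\{\alpha,\bar{\alpha}\}$, so the coupling blocks $A_{\alpha\bar{\alpha}}$ and $A_{\bar{\alpha}\alpha}$ are nonzero, and these are annihilated in $\widehat{B}$. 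Finally, apply the strict monotonicity theorem for nonnegative matrices \cite{horn1990matrix}: if $A$ is nonnegative and irreducible and $0\le F\le A$ with $F\neq A$, then $\rho(F)<\rho(A)$. Taking $F=\widehat{B}$ gives $\rho(B)=\rho(\widehat{B})<\rho(A)$, as required.

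The main obstacle is not the algebra but isolating the correct Perron--Frobenius input and spending the hypotheses exactly where they are needed. Plain monotonicity $\rho(F)\le\rho(A)$ for $0\le F\le A$ is elementary, but the \emph{strict} inequality genuinely requires the dominating matrix $A$ to be irreducible, which is precisely the strong-connectivity assumption underlying Theorem~\ref{thm:EC}. A more self-contained route avoids the cited theorem: take the strictly positive right Perron vector $x$ of $A$ and a nonnegative left Perron vector $y$ of $B$, and compute $\rho(B)\,y^{\mathsf{T}}x_\alpha=\rho(A)\,y^{\mathsf{T}}x_\alpha-y^{\mathsf{T}}A_{\alpha\bar{\alpha}}x_{\bar{\alpha}}$ using $A_{\alpha\alpha}x_\alpha=\rho(A)x_\alpha-A_{\alpha\bar{\alpha}}x_{\bar{\alpha}}$; since $x_{\bar{\alpha}}>0$ one then wants the correction term to be strictly positive. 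This succeeds immediately when $B$ is irreducible (so $y>0$), but when $B$ is reducible one must chase a path in $A$ from the support of $y$ out to $\bar{\alpha}$ --- exactly the bookkeeping the embedding argument sidesteps --- so I would present the embedding proof and relegate the eigenvector computation to a remark. I note finally that the hypothesis ``$B$ nonzero'' is not actually needed for the inequality, since $B=0$ gives $\rho(B)=0<\rho(A)$ (an irreducible nonnegative matrix of order at least two has positive spectral radius), but it does no harm.
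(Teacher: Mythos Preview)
Your proof is correct and follows the same embedding strategy as the paper: pad $B$ to an $n\times n$ matrix $\widehat{B}$ (the paper's $A_2$), note $\rho(\widehat{B})=\rho(B)$, and compare $\widehat{B}$ with $A$ via Perron--Frobenius monotonicity. The paper takes a slightly longer route, introducing an irreducible intermediate $C_\epsilon=\epsilon A_1+A_2$ (with $A_1=A-A_2$) and chaining $\rho(A)=\rho(C_\epsilon+D_\epsilon)>\rho(C_\epsilon)\ge\rho(A_2)=\rho(B)$ via two separate results from \cite{horn1990matrix}; the detour is there because the cited exercise places the irreducibility hypothesis on the \emph{smaller} matrix, which $A_2$ fails. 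Your single invocation of strict monotonicity with irreducibility on the \emph{dominating} matrix ($0\le F\le A$, $F\neq A$, $A$ irreducible $\Rightarrow\rho(F)<\rho(A)$) collapses this to one step and is cleaner. Your explicit remark that the lemma must be read for \emph{principal} submatrices, together with the $2\times2$ permutation-matrix counterexample for the non-principal case, is a useful clarification the paper leaves implicit; indeed the paper's own assertion that ``$A_2$ and $B$ only differ by rows and columns of zeros'' is only valid under that reading.
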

\begin{proof}
Let $A$ be  an irreducible, nonnegative, $n\times n$ matrix with square submatrix $B$. Now we can write $A$ as $A=A_1+A_2$ where $A_1$ is the matrix $A$ with all entries corresponding to $B$ set to zero.  Now let $C_\epsilon=\epsilon A_1+ A_2$ and $D_\epsilon=(1-\epsilon) A_1$ for some $0<\epsilon <1$.  Thus $A=C_\epsilon+D_\epsilon$ where $C_\epsilon$ and $D_\epsilon$ are both nonnegative matrices, and $C_\epsilon$ is irreducible (this is inherited from $A$
) and $D_\epsilon$ is not zero (since B is not zero).  Now according to exercise 8.4.P14 in \cite{horn1990matrix},  $\rho(C_\epsilon+D_\epsilon)>\rho(C_\epsilon)$.  Next we note that $A_2=C_0$, where $C_0$ is $C_\epsilon$ with $\epsilon$ is set to $0$.  But $C_\epsilon> A_2$ for any $\epsilon$, thus we can use Theorem 8.1.18 in \cite{horn1990matrix} to guarantee that $\rho(C_\epsilon)\geq \rho(A_2)$.  Finally we notice that the because $A_2$ and $B$ only differ by rows and columns of zeros. Thus, except for zero eigenvalues, $A_2$ and $B$ share the same spectrum. Therefore, $\rho(A_2)=\rho(B)$. Putting these results together we see that $$ \rho(A)=\rho(C_\epsilon+D_\epsilon)>\rho(C_\epsilon)\geq\rho(A_2)=\rho(B).$$  Thus we have shown that $\rho(A)>\rho(B)$.
\end{proof}

\begin{thm}\textbf{(Theorem \ref{thm:centrality2})}
 Let $G=(V,E,\omega)$ be an undirected graph with adjacency matrix $A$.  Vertices $a$ and $b$ have the same eigenvector centrality if and only if they are symmetric in $\mathcal{P}_{\{a,b\}}(G)$.
\end{thm}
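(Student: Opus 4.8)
The plan is to exploit the fact that for an undirected graph the Perron complement over two vertices is a symmetric $2\times 2$ matrix, and then read the symmetry condition directly off its entries. Write $S=\{a,b\}$, $\rho=\rho(A)$, and $P:=\mathcal{P}_S(A)=A_{SS}-A_{S\bar S}(A_{\bar S\bar S}-\rho I)^{-1}A_{\bar S S}$. First I would observe that because $G$ is undirected, $A$ is symmetric, so $A_{SS}$ and $A_{\bar S\bar S}$ are symmetric and $A_{\bar S S}=A_{S\bar S}^{T}$; hence $(A_{\bar S\bar S}-\rho I)^{-1}$ is symmetric and the correction term has the form $BCB^{T}$ with $B=A_{S\bar S}$ and $C=(A_{\bar S\bar S}-\rho I)^{-1}$ symmetric, making $P$ symmetric. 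Thus I may write $P=\left(\begin{smallmatrix}\alpha&\gamma\\\gamma&\beta\end{smallmatrix}\right)$ with $\gamma=P_{ab}=P_{ba}$. By Theorem \ref{thm:irr}(i), $P$ is nonnegative and irreducible with $\rho(P)=\rho$; since a $2\times2$ nonnegative irreducible matrix must have positive off-diagonal entries, $\gamma>0$. By Theorem \ref{thm:irr}(ii) the leading eigenvector of $P$ is $\mathbf v_S=(v_a,v_b)^{T}$, the projection onto $S$ of the leading (Perron) eigenvector $\mathbf v$ of $A$, so $v_a,v_b>0$ and these are exactly the eigenvector centralities of $a$ and $b$.

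Next I would reduce the notion of symmetry to a single scalar equation. The only nontrivial permutation of $S$ is the transposition $(ab)$, and it is an automorphism of the Perron complement graph $\mathcal{P}_S(G)$ precisely when $P$ is invariant under the simultaneous interchange of its two rows and columns, i.e. when $\alpha=\beta$ (the off-diagonal entries already agree because $P$ is symmetric). Hence the statement ``$a$ and $b$ are symmetric in $\mathcal{P}_S(G)$'' is equivalent to $\alpha=\beta$, and it remains to prove $\alpha=\beta \iff v_a=v_b$.

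Finally I would extract both implications from the eigen-equation $P\mathbf v_S=\rho\mathbf v_S$, which reads $\alpha v_a+\gamma v_b=\rho v_a$ and $\gamma v_a+\beta v_b=\rho v_b$. If $v_a=v_b>0$, dividing each equation by this common value gives $\alpha+\gamma=\rho=\beta+\gamma$, hence $\alpha=\beta$. Conversely, if $\alpha=\beta$, adding the two equations gives $(\alpha+\gamma)(v_a+v_b)=\rho(v_a+v_b)$, so $\rho=\alpha+\gamma$ because $v_a+v_b>0$; subtracting them gives $(\alpha-\gamma-\rho)(v_a-v_b)=0$, and since $\alpha-\gamma-\rho=-2\gamma<0$ we conclude $v_a=v_b$. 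This settles both directions. The main thing to be careful about is that both the symmetry of $P$ and the strict positivity $\gamma>0$ are genuinely used: symmetry of $P$ reduces the automorphism condition to $\alpha=\beta$, and $\gamma>0$ (from irreducibility) is what rules out the degenerate possibility $\alpha-\gamma-\rho=0$ in the converse. This is also precisely where the undirected hypothesis enters, explaining why the equivalence fails for directed graphs.
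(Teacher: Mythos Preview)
Your proposal is correct and follows essentially the same approach as the paper: both arguments use that $A$ symmetric forces the $2\times 2$ Perron complement $P$ to be symmetric, reduce the automorphism condition to equality of the diagonal entries, and then read off the equivalence from the eigen-equation $P\mathbf v_S=\rho\mathbf v_S$. The only minor difference is that for the direction $\alpha=\beta\Rightarrow v_a=v_b$ the paper simply invokes the general fact that symmetric vertices share the same Perron-eigenvector entry and then projects, whereas you derive it directly from the two scalar equations using $\gamma>0$; both are valid and amount to the same thing.
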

\begin{proof}
Suppose $a,b$ are symmetric in $\mathcal{P}_{\{a,b\}}(G)$, then $a,b$ have the same value in  $\mathbf{v}_{\{a,b\}}$, the leading eigenvector of $\mathcal{P}_{\{a,b\}}(A)$.  Thus by Theorem \ref{thm:irr} (ii), $a,b$ must also have the same value in $\mathbf{v}$, the leading eigenvector of $A$ since $\mathbf{v}_{\{a,b\}}$ is just a projection of $\mathbf{v}$.  Thus $a$ and $b$ must have the same eigenvector centrality in $G$.

Now suppose that  $a$ and $b$ have the same eigenvector centrality in $G$.  Thus for the unique largest eigenvalue, which in this case is the spectral radius, $\rho(G)=\lambda_0$ with corresponding eigenvector $\mathbf{v}$, we know that $\mathbf{v}_a=\mathbf{v}_b=v$, (where $\mathbf{v}_i$ is the entry in $\mathbf{v}$  corresponding to vertex $i$). Now using Theorem 1 in
\cite{duarte2015eigenvectors}, we know that $\lambda_0$ is also an eigenvalue of $\mathcal{R}_{\{ a,b\} }(G)$  with corresponding eigenvector ${{\bf{v}}_{\{ a,b\} }} = \left( {\begin{matrix}
   v  \\
   v  \\

 \end{matrix} } \right)$.  Now using the definition of the eigenpair we get
\begin{align}
  & {\mathcal{R}_{\{ a,b\} }}(A){|_{\lambda  = {\lambda _0}}}{{\bf{v}}_{\{ a,b\} }} = {\lambda _0}{{\bf{v}}_{\{ a,b\} }}  \cr
  & \left( {\begin{matrix}
   {{p_{11}}({\lambda _0})} & {{p_{12}}({\lambda _0})}  \\
   {{p_{21}}({\lambda _0})} & {{p_{22}}({\lambda _0})}  \\
 \end{matrix} } \right)\left( {\begin{matrix}
   v  \\
   v  \\
 \end{matrix} } \right) = \left( {\begin{matrix}
   {{\lambda _0}v}  \\
   {{\lambda _0}v}  \\
 \end{matrix} } \right)
 \end{align}
	
 However; because we started with a symmetric matrix (undirected graph) we know that the result of an isospectral matrix reduction is also symmetric (see \cite{thebook}. Thus ${p_{21}}(\lambda ) = {p_{12}}(\lambda )$ and
	  \begin{align}
  & \left( {\begin{matrix}
   {{p_{11}}({\lambda _0})} & {{p_{12}}({\lambda _0})}  \\
   {{p_{12}}({\lambda _0})} & {{p_{22}}({\lambda _0})}  \\
 \end{matrix} } \right)\left( {\begin{matrix}
   v  \\
   v  \\
 \end{matrix} } \right) = \left( {\begin{matrix}
   {{\lambda _0}v}  \\
   {{\lambda _0}v}  \\
 \end{matrix} } \right)  \cr
  \end{align}
Therefore we conclude that ${p_{11}}({\lambda _0}) = {p_{22}}({\lambda _0}) $ and the reduced matrix has the form
${\mathcal{R} _{\{ a,b\} }}(A){|_{\lambda  = \rho (G)}}$ has the form $$\left( {\begin{matrix}
   {{p_{11}}({\lambda _0})} & {{p_{12}}({\lambda _0})}  \\
   {{p_{12}}({\lambda _0})} & {{p_{11}}({\lambda _0})}  \\
 \end{matrix} } \right)$$which is symmetric between the two remaining vertices.
\end{proof}

\bibliographystyle{unsrt}
\bibliographystyle{plain}
\bibliography{bibfile}
\end{document}